\documentclass[11pt,authoryear]{article}
\pdfoutput = 1
\usepackage{fullpage,amsthm,amsmath,natbib,algorithm,algorithmic,enumitem,afterpage,amssymb,setspace,graphicx,amsfonts,array,verbatim}
\usepackage[hang,flushmargin]{footmisc}
\usepackage[usenames,dvipsnames]{xcolor}
\usepackage[colorlinks=TRUE,citecolor=Blue,linkcolor=BrickRed,urlcolor=PineGreen]{hyperref}

\DeclareMathOperator*{\veco}{vec}

\DeclareMathOperator*{\diag}{diag}
\DeclareMathOperator*{\diverge}{div}
\DeclareMathOperator*{\sure}{SURE}
\DeclareMathOperator*{\gsure}{GSURE}
\DeclareMathOperator*{\mse}{MSE}
\DeclareMathOperator*{\sumo}{Sum}
\DeclareMathOperator*{\Bin}{Bin}
\DeclareMathOperator*{\sign}{sign}

\newtheorem{theorem}{Theorem}

\newtheorem{definition}{Definition}

\allowdisplaybreaks

\bibliographystyle{plainnat}

\begin{document}
\singlespacing
\title{Adaptive Higher-order Spectral Estimators}
\author{David Gerard$^1$ and Peter Hoff$^{2}$ \\
\\
$^1$Department of Human Genetics,
University of Chicago, Chicago, IL, 60637, USA\\
$^2$Department of Statistical Science,
Duke University, Durham, NC, 27708, USA}
\date{February 22, 2017}
\maketitle
\let\thefootnote\relax\footnotetext{Email: dcgerard@uchicago.edu,  peter.hoff@duke.edu. This research was partially supported by NI-CHD grant R01HD067509.}

\begin{abstract}
  Many applications involve estimation of a signal matrix from a noisy data matrix. In such cases, it has been observed that estimators that shrink or truncate the singular values of the data matrix perform well when the signal matrix has approximately low rank. In this article, we generalize this approach to the estimation of a tensor of parameters from noisy tensor data. We develop new classes of estimators that shrink or threshold the mode-specific singular values from the higher-order singular value decomposition. These classes of estimators are indexed by tuning parameters, which we adaptively choose from the data by minimizing Stein's unbiased risk estimate. In particular, this procedure provides a way to estimate the multilinear rank of the underlying signal tensor. Using simulation studies under a variety of conditions, we show that our estimators perform well when the mean tensor has approximately low multilinear rank, and perform competitively when the signal tensor does not have approximately low multilinear rank. We illustrate the use of these methods in an application to multivariate relational data.\\

  \noindent \emph{Keywords:} higher-order SVD, network, relational data, shrinkage, SURE, tensor.\\
  \emph{MSC 2000:} 62H12, 15A69, 62C99, 91D30, 62H35.
\end{abstract}

\section{Introduction}
\label{sec:intro}

Tensor data arise in fields as diverse as relational data
\citep{hoff2015multilinear}, neuroimaging
\citep{zhang2014tensor,li2016parsimonious}, psychometrics
\citep{kiers2001three}, chemometrics
\citep{smilde2005multi,bro2006review}, signal processing
\citep{cichocki2015tensor}, and machine learning
\citep{tao2005supervised}, among others
\citep{kroonenberg2008applied}.
A tensor $\mathcal{X} \in \mathbb{R}^{p_1\times\cdots\times p_K}$ with
$p_k \in \{1,2,\ldots\}$ of order $K$ is a $K$-way array where the
elements $\mathcal{X}_{[i_1,\ldots,i_K]}$ are indexed by $i_k \in
\{1,2,\ldots,p_k\}$ for $k = 1,\ldots,K$. For example, a multivariate
relational dataset can be expressed as a tensor, where element
$\mathcal{X}_{[i,j,t]}$ of the tensor is the $t$th relation between
actors $i$ and $j$.

Often, a tensor is corrupted by noise. The model we consider for this
is:
\begin{align}
  \label{equation:normal.model}
  \mathcal{X} = \Theta + \mathcal{E},\text{  } \mathcal{E}_{[i_1,\ldots,i_K]} \sim N(0,\tau^2) \text{ independent for } i_k = 1,\ldots,p_k, \text{ and } k = 1,\ldots,K,
\end{align}
where $\Theta \in \mathbb{R}^{p_1\times\cdots\times p_K}$ is the
signal and $\mathcal{E} \in \mathbb{R}^{p_1\times\cdots\times p_K}$ is
the additive Gaussian measurement error or noise with mean 0 and
various $\tau^2$. The performance of an estimator
$t(\mathcal{X})\in \mathbb{R}^{p_1\times\cdots\times p_K}$ can be
evaluated by statistical risk under quadratic loss, i.e.\ mean squared
error (MSE):
\begin{align}
  \label{equation:mse}
  \mse(t(\mathcal{X})) = E_{\Theta}[||\Theta - t(\mathcal{X})||^2] = \sum_{\mathbf{i}} E_{\Theta}[(\Theta_{[\mathbf{i}]} - t(\mathcal{X})_{[\mathbf{i}]})^2],
\end{align}
where $\mathbf{i} = (i_1,\ldots,i_K)$ is a $K$-tuple of tensor
indices.

In the matrix variate case, $X \in \mathbb{R}^{p \times n}$, an
investigator often believes that the mean is well approximated by a
low rank matrix. There has been much work on ``denoising'' (or mean
estimation) in matrix variate data by using this knowledge. A typical
estimation scheme begins by computing the singular value decomposition
(SVD) of $X$:
\begin{align}
  \label{equation:svd}
  X = UDV^T,
\end{align}
where, in the case $n \geq p$, $U \in \mathbb{R}^{p \times p}$ is
orthogonal, $D = \diag(\sigma_1,\ldots,\sigma_p)$ with $\sigma_1 \geq
\ldots\geq \sigma_p \geq 0$, and $V \in \mathbb{R}^{n \times p}$
contains orthonormal columns. The columns of $U$ and $V$ are,
respectively, the left and right singular vectors of $X$ and the
diagonal elements of $D$ are the singular values. A key property of
the SVD is that the number of non-zero singular values of $X$ is
precisely the rank of $X$. One widely studied approach to estimating
$\Theta$ when it is assumed that $\Theta$ has nearly low rank is to
shrink the singular values of $X$ towards $0$ while keeping the
singular vectors unchanged, thereby inducing an (approximately) low
rank estimate. The resulting ``spectral'' estimator $t(\mathcal{X})$
of $\Theta$ then takes the form $t(\mathcal{X}) = Uf(D)V^T$ where
$f(D) = \diag(f_1(\sigma_1),\ldots,f_K(\sigma_K))$ and each
$f_i(\cdot)$ shrinks the singular values towards $0$. These estimators
are orthogonally equivariant, meaning that $t(WXZ^T) = Wt(X)Z^T$ for
orthogonal matrices $W,Z$ \citep{shabalin2013reconstruction}.

Early work on singular value shrinkage estimation from a
non-statistical perspective began with \cite{eckart1936approximation},
where they proved that the best rank $r$ approximation to the data
matrix $X \in \mathbb{R}^{p \times n}$ (in terms of sum of squared
differences from $X$) is found with the shrinkage function:
\begin{align}
  \label{equation:eckart.young}
  f_i(\sigma_i) = \sigma_i 1(i \leq r),
\end{align}
where $1(\cdot)$ is the indicator function. We call
(\ref{equation:eckart.young}) the truncation estimator. However,
approximating the data $X$ well is not the same as estimating the
underlying signal $\Theta$ well. In terms of estimating $\Theta$, the
matrix $X$ is unbiased, minimax, and the maximum likelihood estimator
under normally distributed errors. However, it is well known that
shrinkage estimators, such at that of \cite{stein1981estimation} can
uniformly dominate $X$ in terms of risk. This seminal shrinkage
estimator, in the context of matrix estimation, is given by
\begin{align}
  \label{equation:stein}
  f_i(\sigma_i) = \left(1 - \frac{\lambda}{\sum_{i = 1}^p\sigma_i^2}\right)\sigma_i,
\end{align}
where $\lambda > 0$ is some tuning parameter. For data that exhibit
associations between the rows and/or columns of the mean matrix, the
estimator of \cite{efron1972empirical}, given by
\begin{align}
  \label{equation:efron.morris}
  f_i(\sigma_i) = \sigma_i - \frac{\lambda}{\sigma_i},
\end{align}
was introduced and results in different amounts of shrinkage for each
singular value.  \cite{efron1976multivariate} improved upon this
estimator with a generalization of both (\ref{equation:stein}) and
(\ref{equation:efron.morris}), given by
\begin{align}
  \label{equation:improved.em}
  f_i(\sigma_i) = \left(1 - \frac{\gamma}{\sum_{i = 1}^p\sigma_i^2}\right)\sigma_i - \frac{\lambda}{\sigma_i},
\end{align}
where $\lambda > 0$ and $\gamma > 0$ are tuning parameters.

More recent work has focused on estimators whose functions
$f_i(\cdot)$ induce sparsity in the singular values, which may be more
appropriate than (\ref{equation:stein}),
(\ref{equation:efron.morris}), and (\ref{equation:improved.em}) in
cases where the true signal itself has (approximately) low
rank. Motivated by penalized maximum likelihood estimation, the
hard-thresholding estimator
\begin{align}
  \label{equation:hard.thresholding}
  f_i(\sigma_i) = \sigma_i 1(\sigma_i \geq \lambda)
\end{align}
and the soft-thresholding estimator
\begin{align}
  \label{equation:soft.thresholding}
  f_i(\sigma_i) = (\sigma_i - \lambda)_+
\end{align}
were introduced \citep[for example]{candes2013unbiased}. Here, $(y)_+
= \max(y,0)$ is the ``positive part'' function. A clever shrinkage
function that includes (\ref{equation:hard.thresholding}),
(\ref{equation:soft.thresholding}), and a truncated version of
(\ref{equation:efron.morris}) \citep{verbanck2015regularised} as
special cases is that of \cite{josse2015adaptive}:
\begin{align}
  \label{equation:josse.sardy.est}
  f_i(\sigma_i) = \sigma_i\left(1 - \frac{\lambda^{\gamma}}{\sigma_i^{\gamma}}\right)_+.
\end{align}
This estimator was inspired by the adaptive LASSO
\citep{zou2006adaptive}. A variety of other shrinkage estimators have
also been developed
\citep{nadakuditi2014optshrink,shabalin2013reconstruction}.

All of these estimators are specific to matrix-variate data. If one
were to apply these matrix methods to a tensor, one would first
convert the tensor into a matrix. For a $K$-dimensional tensor, such
``matricization'' destroys the indexing structure along all but one of
the dimensions. This may be detrimental to estimation if, in addition
to a data set having approximately low rank, it also has approximately
low \emph{multilinear} rank (see Section \ref{sec:tensor}), that is,
``matricizing'' along each index set, or ``mode'', results in a low
rank matrix.

An extreme simulated example that exhibits this phenomenon is
presented in Figure \ref{fig:sim.extreme}. There, we plotted the
mode-specific singular values of a tensor that we generated to have
full rank along one mode and low ranks along two modes. That is, we
plotted the singular values of each matricization of the tensor. If an
analyst were presented with a noisy version of this tensor and only
matricizing along the first mode, then they would only observe a noisy
realization of the solid lines, which would suggest the data are full
rank. However, the second and third modes have low-rank structure and
shrinking the singular values along these additional modes may improve
estimation.

\begin{figure}
\begin{center}
\includegraphics{./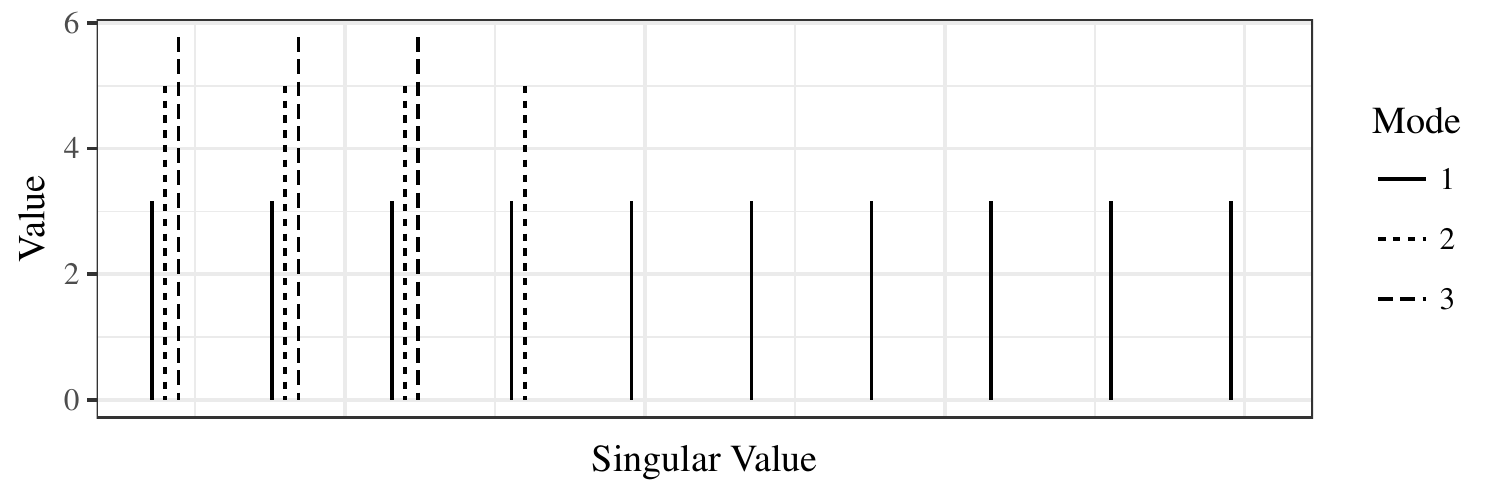}
\caption{Mode-specific singular values of simulated tensor with full
  rank along first mode and low-ranks along second and third modes.}
\label{fig:sim.extreme}
\end{center}
\end{figure}



In this article, we introduce a family of estimators that shrink
tensor-valued data towards having (approximately) low multilinear
rank. We perform this shrinkage on a reparameterization of the
higher-order singular value decomposition (HOSVD) of
\cite{de2000multilinear}, where we shrink the mode-specific singular
values of the data tensor towards zero. We consider classes of such
``higher-order spectral estimators'', where a class is defined by a
mode-specific shrinkage function indexed by a tuning parameter. We
propose to adaptively select the tuning parameters by minimization of
an unbiased estimate of the risk.


Our paper is organized as follows. In Section \ref{sec:tensor}, we
review tensors and the HOSVD. We then present how one may define
functions that shrink the mode-specific singular values of the
HOSVD. In particular, we present two specific estimators that shrink
the data tensor towards having (approximately) low multilinear rank
and provide some discussion on the intuition behind these
estimators. In Section \ref{sec:sure_form}, we review Stein's unbiased
risk estimates (SURE), then derive the SURE for a broad class of
higher-order spectral estimators. In Section \ref{sec:simulation} we
present simulations demonstrating that (1) tensor specific methods
perform better when the mean tensor has approximately low multilinear
rank; (2) when the mean tensor has low multilinear rank our methods
accurately estimate the multilinear rank; and (3) tensor specific
methods perform competitively when the signal tensor does not have
approximately low multilinear rank. In Section \ref{sec:NBA} we
illustrate the use of these methods in an application to multivariate
relational data. We finish with a discussion in Section
\ref{sec:discussion}.

\section{The higher-order SVD and higher-order spectral estimators}
\label{sec:tensor}

Some tensor data sets have approximately low \emph{multilinear rank},
which we now define. Recall that the rank of a matrix is the dimension
of the vector space spanned by its columns and rows. Define the
$k$-mode vectors of a tensor $\mathcal{X} \in
\mathbb{R}^{p_1\times\cdots\times p_K}$ as the $p_k$-dimensional
vectors formed from $\mathcal{X}$ by varying $i_k$ and keeping the
other indices fixed. The $k$-mode rank $r_k$ is the dimension of the
span of the $k$-mode vectors, and the multilinear rank of the
$K$-order tensor $\mathcal{X}$ is the $K$-tuple,
$(r_1,\ldots,r_K)$. Define the $k$-mode matricization
\citep{kolda2009tensor}, or $k$-mode unfolding, of $\mathcal{X}$ to be
$\mathcal{X}_{(k)}\in \mathbb{R}^{p_k \times p/p_k}$ (with $p =
\prod_{k=1}^Kp_k$) where element $(i_1,\ldots,i_K)$ in $\mathcal{X}$
maps to element $(i_k,j)$ in $\mathcal{X}_{(k)}$ where
\begin{align*}
  j = 1 + \sum_{\substack{n = 1 \\ n\neq k}}^{K}(i_n - 1)J_n \text{ with } J_n = \prod_{\substack{ m = 1 \\ m \neq k}}^{n-1}p_m.
\end{align*}
Then, equivalently, $r_k$ is the rank of $\mathcal{X}_{(k)}$.

The SVD , presented in Section \ref{sec:intro}, has been used to
shrink matrix valued data towards low rank. One generalization of the
SVD to tensors is the HOSVD of \cite{de2000multilinear}, which relates
directly to multilinear rank.

\begin{definition}[HOSVD of \cite{de2000multilinear}]
  Let $\mathcal{X}_{(k)} = U_kD_kV_k^T$ be the SVD of each $k$-mode
  unfolding of $\mathcal{X}$. Let $\mathcal{S} =
  (U_1^T,\ldots,U_K^T)\cdot\mathcal{X}$, then
  \begin{align}
    \label{equation:hosvd}
    \mathcal{X} = (U_1,\ldots,U_K)\cdot \mathcal{S}
  \end{align}
  is the higher-order singular value decomposition (HOSVD).
\end{definition}
The product ``$\cdot$'' in (\ref{equation:hosvd}) between a list of
matrices, $\{U_1,\ldots,U_K\}$ for $U_k \in \mathbb{R}^{p_k \times
  p_k}$, and a tensor, $\mathcal{S} \in
\mathbb{R}^{p_1\times\cdots\times p_K}$ is called the \emph{Tucker
  product}. The Tucker product is defined through the $k$-mode
matricizations of $(U_1,\ldots,U_K)\cdot \mathcal{S}$:
\begin{align*}
  \begin{split}
    &\mathcal{X} = (U_1,\ldots,U_K)\cdot \mathcal{S}\\
    &\Leftrightarrow \mathcal{X}_{(k)} = U_k\mathcal{S}_{(k)}(U_K^T\otimes\cdots\otimes U_{k+1}^T\otimes U_{k-1}^T\otimes\cdots\otimes U_1^T) = U_k\mathcal{S}_{(k)}U_{-k}^T,
  \end{split}
\end{align*}
where ``$\otimes$'' is the Kronecker product.  The ``core array'',
$\mathcal{S}$ has the property of \emph{all-orthogonality} where
\begin{align*}
  \mathcal{S}_{(k)}\mathcal{S}_{(k)}^T = D_k^2 \text{ for all } k = 1,\ldots,K.
\end{align*}

The HOSVD is multilinear rank-revealing in the same way the SVD is
rank-revealing. That is, let $D_k =
(\mathcal{S}_{(k)}\mathcal{S}_{(k)}^T)^{1/2} =
\diag(\sigma_1^k,\ldots,\sigma_{p_k}^k)$ be the mode specific singular
values of $\mathcal{X}$. Then the multilinear rank of $\mathcal{X}$ is
$(r_1,\ldots,r_K)$ if $D_k$ contains $r_k$ non-zero mode-specific
singular values. In the core array, this is equivalent to
$\mathcal{S}$ containing zeros everywhere except in one of the
``corners'': $\mathcal{S}_{[1:r_1,\ldots,1:r_K]}$, where $1:r_k =
1,\ldots,r_k$. It is possible, then, to shrink $\mathcal{S}$ towards
having (approximately) low multilinear rank by shrinking the elements
in $\mathcal{S}$ towards $0$. We propose doing this via a
re-parameterization of $\mathcal{S}$, given as follows:
\begin{align}
  \label{equation:hosvd.rewrite}
  \begin{split}
    \mathcal{X} = (U_1,\ldots,U_K) \cdot (D_1,\ldots,D_K)\cdot (D_1^{-1},\ldots,D_K^{-1})\cdot \mathcal{S}\\
    = (U_1,\ldots,U_K) \cdot (D_1,\ldots,D_K)\cdot \mathcal{V},
  \end{split}
\end{align}
where $\mathcal{S} = (D_1,\ldots,D_K)\cdot \mathcal{V}$. Our
higher-order spectral estimators shrink $\mathcal{S}$ by shrinking
each mode-specific $D_k$. We abuse notation a little by allowing
``$\cdot$'' to also represent a binary operator between two lists of
matrices whose operation is component-wise multiplication. This should
not cause confusion because $(A_1B_1,\ldots,A_KB_K) \cdot \mathcal{C}
= (A_1,\ldots,A_K) \cdot [(B_1,\ldots,B_K) \cdot \mathcal{C}]$.

Using reparameterization (\ref{equation:hosvd.rewrite}), we now define
higher-order spectral estimators of $\Theta$ under the model
(\ref{equation:normal.model}).
\begin{definition}
  Let $\mathcal{X} = (U_1,\ldots,U_K) \cdot (D_1,\ldots,D_K)\cdot \mathcal{V}$ as in (\ref{equation:hosvd.rewrite}) with $D_k = \diag(\sigma_1^k,\ldots,\sigma_{p_k}^k)$. An estimator $t(\mathcal{X})$ of the form
  \begin{align}
    \label{equation:ho.spect.est}
    t(\mathcal{X}) = (U_1,\ldots,U_K) \cdot (f^1(D_1),\ldots,f^K(D_K))\cdot \mathcal{V},
  \end{align}
  where $f^k(D_k) =
  \diag(f_1^k(\sigma_1^k),\ldots,f_{p_k}^k(\sigma_{p_k}^k))$, is
  called a \emph{higher-order spectral estimator}.
\end{definition}

Each of the matrix shrinkage functions listed in Section
\ref{sec:intro}
(\ref{equation:eckart.young})-(\ref{equation:josse.sardy.est}) may, in
principle, be applied to each mode in our higher-order spectral
estimator (\ref{equation:ho.spect.est}). We focus on two examples of
higher-order spectral estimators. One of these is a generalization of
the matrix truncation estimator (\ref{equation:eckart.young}) and the
other is a generalization of the matrix soft-thresholding estimator
(\ref{equation:soft.thresholding}). The former can be used to choose
the multilinear rank of $\Theta$, the latter is for estimation of
$\Theta$ when we suspect that the mean tensor has approximately low
multilinear rank.

\paragraph{Example: Truncated HOSVD to find the multilinear rank.}
The first step in many tensor applications is to choose the
multilinear rank of the underlying signal, a difficult task
\citep{timmerman2000three,kiers2003fast,ceulemans2006selecting}. The
methods in this paper present a way to choose the multilinear
rank. The truncated HOSVD is one popular way to induce low multilinear
rank \citep{de2000multilinear}. Given multilinear rank
$(r_1,\ldots,r_K)$, it is found by taking the HOSVD
(\ref{equation:hosvd}) and setting all elements in $\mathcal{S}$
except the ``corner'' $\mathcal{S}_{[1:r_1,\ldots,1:r_K]}$ to $0$. The
truncated HOSVD may be viewed as a higher-order spectral estimator
(\ref{equation:ho.spect.est}), where
\begin{align}
  \label{equation:trunc.shrink}
  f^k_i(\sigma_i^k) = \sigma_i^k1(i \leq r_k).
\end{align}
This sets to 0 all but $r_k$ of the mode-specific singular values,
resulting in an estimate of $\Theta$ that has multilinear rank
$(r_1,\ldots,r_K)$. The set of all possible multilinear ranks defines
a class of reduced rank estimators of $\Theta$. In this paper, we
suggest adaptively selecting an estimator from this class by
minimizing an unbiased estimate of the risk.

\paragraph{Example: Mode-specific soft-thresholding.}
Shrinking all of the singular values can generally improve estimation
over just truncating the smallest few singular values. A popular form
of shrinkage that accomplishes this, a result of nuclear-norm
regularization, is the soft-thresholding estimator
(\ref{equation:soft.thresholding}). The second estimator we explore is
obtained by applying soft-thresholding to the mode-specific singular
values:
\begin{align}
  \label{equation:mode.specific.soft}
  f^k_i(\sigma_i^k) = (\sigma_i^k - \lambda_k)_+.
\end{align}
As with the previous example, the set of
$(\lambda_1,\ldots,\lambda_K)$ defines a class of estimators. We
propose adaptively selecting a member of this class by minimizing an
unbiased estimate of the risk.

A few words are in order about the mode-specific soft-thresholding
estimator in (\ref{equation:mode.specific.soft}). First, we note that
the resulting core array $(f^1(D_1)D_1^{-1},\ldots,f^K(D_K)D_K^{-1})
\cdot \mathcal{S}$ is not generally all-orthogonal. Hence, the
$f^k(D_k)$ are not actually the new mode-specific singular values of
the estimator $t(\mathcal{X})$. That is, it would be incorrect to
think that subtracting off $\lambda_1$ from the first-mode singular
values means that the new first-mode singular values are
$\sigma_{i_1}^1 - \lambda_1$. We are altering the mode-specific
singular values, but the relationship is complex. Rather, the proper
intuition for shrinkage functions of the form
(\ref{equation:mode.specific.soft}) is that the larger the value of
$\lambda_k$, the more dispersed the resulting mode-specific singular
values tend to be on a normalized scale. Likewise, the more negative
the value of $\lambda_k$ to the singular values the less dispersed
the resulting mode-specific singular values tend to be. To gain
intuition regarding this phenomenon, we provide an extreme case. We
generated a $10 \times 10 \times 10$ tensor where each mode had
approximately the same singular values. The first-mode specific
singular values were $(947, 873, 844, 801, 746, 698, 675, 597, 524,
472)$. We applied the mode specific soft-thresholding function
(\ref{equation:mode.specific.soft}) to each mode with $\lambda_1 =
500$, $\lambda_2 = 0$, $\lambda_3 = -10000$. We then calculated the
mode-specific singular values of the resulting tensor and compared
these to the original mode-specific singular values, scaled to sum to
one. The comparisons can be found in Figure
\ref{fig:function.intuition}. The changed (and normalized) singular
values are more dispersed for the first mode, remain relatively
unchanged for the second, and are less dispersed for the third.

We have found that we can improve performance (with respect to MSE) by
adding an overall scale tuning parameter. That is, we consider a
shrinkage estimator of the form:
\begin{align}
  \label{equation:msst.est}
  t(\mathcal{X}) = c\ (U_1,\ldots,U_K)\cdot(f^1(D_1)D_1^{-1},\ldots,f^K(D_K)D_K^{-1})\cdot\mathcal{S},
\end{align}
where $c > 0$ is the overall scale parameter,
$f^k(D_k) =
\diag(f_1^k(\sigma_1^k),\ldots,f_{p_k}^k(\sigma_{p_k}^k))$,
and $f_i^k(\cdot)$ is from (\ref{equation:mode.specific.soft}).

\begin{figure*}
\begin{center}
\includegraphics{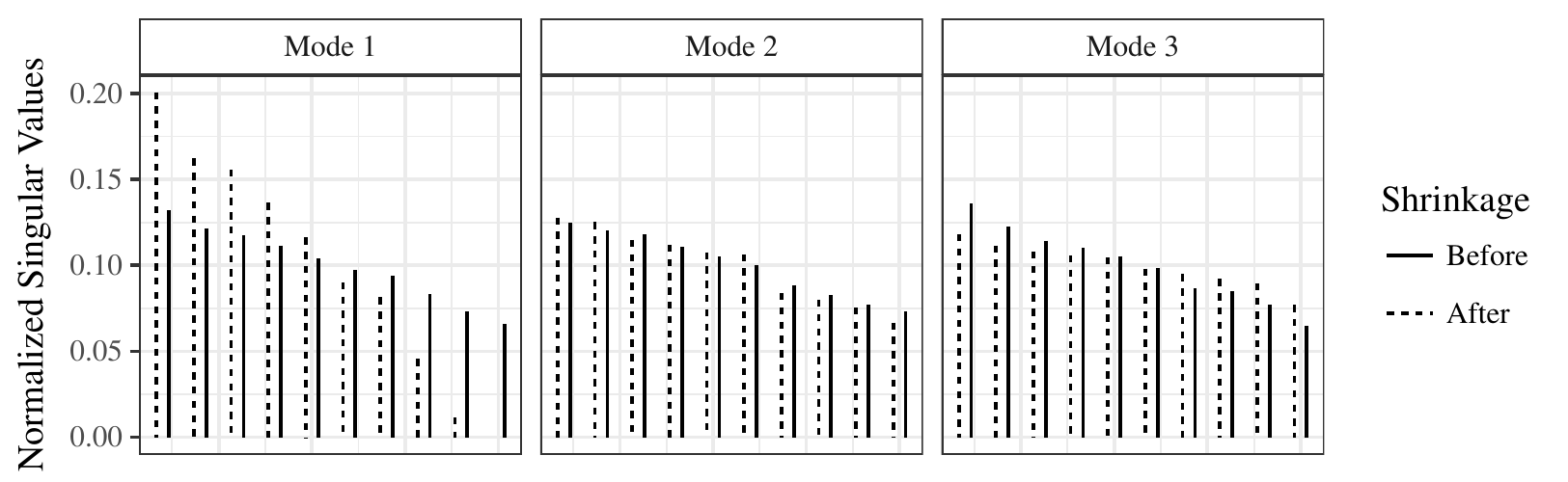}
\caption{Singular values for the three modes, before and after
  shrinkage, normalized to sum to one.}
\label{fig:function.intuition}
\end{center}
\end{figure*}

\section{Stein's unbiased risk estimate}
\label{sec:sure_form}

Both shrinkage function (\ref{equation:trunc.shrink}) and
(\ref{equation:msst.est}) define classes of estimators, indexed by
tuning parameters.  Ideally, we would like to choose these tuning
parameters by minimizing the risk (\ref{equation:mse}). However,
because the mean $\Theta$ is unknown, minimization of
(\ref{equation:mse}) with respect to the tuning parameters is not
possible. One approach for selecting an estimator from one of these
classes is to minimize a risk estimate that does not depend on the
unknown parameter. One such estimate is Stein's unbiased risk
estimate:
\begin{theorem}[\cite{stein1981estimation}]
  \label{theorem:stein.theorem}
  Under the model (\ref{equation:normal.model}), suppose $t: \mathbb{R}^{p_1\times\cdots\times p_K} \rightarrow \mathbb{R}^{p_1\times\cdots\times p_K}$ is an almost differentiable function for which
  \begin{align}
    \label{equation:int.condition}
    E_{\Theta}\left[\sum_{\mathbf{i}} \left|\frac{d}{d\mathcal{X}_{[\mathbf{i}]}}t_{\mathbf{i}}(\mathcal{X}_{[\mathbf{i}]})\right|\right] < \infty.
  \end{align}
  Then
  \begin{align*}
    \mse(t(\mathcal{X})) = E_{\Theta}[||\Theta - t(\mathcal{X})||^2] = E_{\Theta}\left[ ||t(\mathcal{X}) - \mathcal{X}||^2 + 2\tau^2\diverge(t(\mathcal{X})) - p\tau^2\right],
  \end{align*}
  where $\diverge(\cdot)$ is the divergence of $t(\cdot)$. We denote
  Stein's unbiased risk estimate (SURE) as
  \begin{align}
    \label{equation:sure.form}
    \sure(t) = ||t(\mathcal{X}) - \mathcal{X}||^2 + 2\tau^2\diverge(t(\mathcal{X})) - p\tau^2.
  \end{align}
\end{theorem}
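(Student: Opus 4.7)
The plan is to reduce the tensor statement to the classical scalar Stein identity via vectorization, so it suffices to track how the squared-error expansion produces a cross term and then apply Gaussian integration by parts componentwise. Let $\veco$ denote the stacking of $\mathcal{X}$ into a $p$-vector; then $\veco(\mathcal{X}) \sim N(\veco(\Theta), \tau^2 I_p)$ and $t$ may be viewed as a map $\mathbb{R}^p \to \mathbb{R}^p$ with $\|\cdot\|$ agreeing with the Euclidean norm. This reduction lets me cite Stein's original lemma rather than reprove it in the tensor indexing.

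The first concrete step is the algebraic identity
\begin{align*}
  \|\Theta - t(\mathcal{X})\|^2 = \|\Theta - \mathcal{X}\|^2 + 2\langle \Theta - \mathcal{X},\, \mathcal{X} - t(\mathcal{X})\rangle + \|\mathcal{X} - t(\mathcal{X})\|^2.
\end{align*}
Taking expectations, the first term contributes $p\tau^2$ because $\Theta - \mathcal{X} = -\mathcal{E}$ has iid $N(0,\tau^2)$ entries. The third term is already in the form appearing in $\sure(t)$. All the work therefore lies in showing that the cross term equals $2\tau^2 E_\Theta[\diverge(t(\mathcal{X}))] - 2p\tau^2$. Writing $\Theta - \mathcal{X} = -\mathcal{E}$ and expanding, one gets
\begin{align*}
  E_\Theta\bigl[\langle \Theta - \mathcal{X}, \mathcal{X} - t(\mathcal{X})\rangle\bigr] = -E_\Theta\bigl[\langle \mathcal{E}, \mathcal{E}\rangle\bigr] + E_\Theta\bigl[\langle \mathcal{E}, t(\mathcal{X})\rangle\bigr] = -p\tau^2 + \sum_{\mathbf{i}} E_\Theta\bigl[\mathcal{E}_{[\mathbf{i}]} t_{\mathbf{i}}(\mathcal{X})\bigr].
\end{align*}

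The main step is then to show, for each index $\mathbf{i}$, the scalar Stein identity
\begin{align*}
  E_\Theta\bigl[\mathcal{E}_{[\mathbf{i}]}\, t_{\mathbf{i}}(\mathcal{X})\bigr] = \tau^2\, E_\Theta\!\left[\tfrac{d}{d\mathcal{X}_{[\mathbf{i}]}} t_{\mathbf{i}}(\mathcal{X})\right].
\end{align*}
Conditioning on the remaining coordinates $\mathcal{X}_{[\mathbf{j}]}$ for $\mathbf{j}\neq\mathbf{i}$, the one-dimensional slice $\mathcal{X}_{[\mathbf{i}]}$ is $N(\Theta_{[\mathbf{i}]},\tau^2)$, and the identity follows from integration by parts against the Gaussian density, using the almost-differentiability of $t_{\mathbf{i}}$ in its $\mathbf{i}$-th argument to dispose of the boundary terms. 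Summing over $\mathbf{i}$ produces the divergence. Collecting everything,
\begin{align*}
  E_\Theta\bigl[\|\Theta - t(\mathcal{X})\|^2\bigr] = p\tau^2 + 2\bigl(-p\tau^2 + \tau^2 E_\Theta[\diverge(t(\mathcal{X}))]\bigr) + E_\Theta\bigl[\|\mathcal{X} - t(\mathcal{X})\|^2\bigr],
\end{align*}
which rearranges to the claimed SURE identity.

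The principal obstacle is purely technical rather than conceptual: justifying the integration by parts when $t$ is only almost differentiable. This is exactly what the integrability hypothesis (\ref{equation:int.condition}) controls, by ensuring both that Fubini applies to the conditioning argument and that the boundary contributions in the antiderivative vanish at $\pm\infty$. Once that regularity is in place, the algebraic assembly above is routine.
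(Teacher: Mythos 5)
Your proposal is correct, and it is essentially the same argument the paper relies on: the paper offers no proof of this theorem, citing \cite{stein1981estimation}, and your route---vectorize to $N(\veco(\Theta),\tau^2 I_p)$, expand $\|\Theta-t(\mathcal{X})\|^2$ into three terms, and handle the cross term coordinatewise via Gaussian integration by parts with condition (\ref{equation:int.condition}) justifying Fubini and the disposal of boundary contributions---is precisely Stein's classical proof. No gaps of substance; the only cosmetic point is that your cross-term display drops the mean-zero term $\langle\mathcal{E},\Theta\rangle$, which is harmless since the identity is asserted under $E_\Theta$.
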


``Almost differentiable'' basically means differentiable everywhere
except on a set of Lebesgue measure zero \citep[Definition
1]{stein1981estimation}. Because the SURE (\ref{equation:sure.form})
does not depend on the parameter values $\Theta$, we can minimize the
SURE and use this minimization as a proxy for minimizing the risk. In
many cases, adaptive estimators obtained by minimizing SURE over a
class of estimators yields improved risk performance, as was observed
by \cite{candes2013unbiased} in the matrix case.

The difficult part of (\ref{equation:sure.form}) is calculating the
divergence. We will spend the next two subsections performing this
task. First, we will calculate the differentials for the elements of
the altered HOSVD (\ref{equation:hosvd.rewrite}) in Subsection
\ref{subsection:diff}. Then we will use these differentials to derive
the divergence of estimators of the form (\ref{equation:ho.spect.est})
in Subsection \ref{subsection:div}. This divergence can then be
inserted into (\ref{equation:sure.form}) to obtain the SURE.

\subsection{Differentials of the HOSVD}
\label{subsection:diff}
In this subsection, we calculate the differentials for the elements in
the altered HOSVD (\ref{equation:hosvd.rewrite}). In what follows, we
will assume that $\mathcal{X}$ has full multilinear rank. Given that
$p_k \leq p/p_k$ for all $k = 1,\ldots,K$, where $p =
\prod_{k=1}^Kp_k$, this rank condition is fulfilled almost surely for
data $\mathcal{X}$ that have a p.d.f. that is absolutely continuous
with respect to Lebesgue measure on $\mathbb{R}^{p_1\times\cdots\times
  p_K}$ \cite[Proposition 7.2]{de2008tensor}.

\begin{theorem}
  \label{theorem:diff}
  The differentials of $D_k$, $U_k$, and $\mathcal{V}$ from
  (\ref{equation:hosvd.rewrite}) are given in equations
  (\ref{equation:d.D}), (\ref{equation:d.U}), and
  (\ref{equation:d.V}), respectively.
\end{theorem}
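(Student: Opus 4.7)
My plan is to reduce everything to the well-known matrix-SVD differential calculus applied mode by mode. The starting point is the observation that for each $k$, $\mathcal{X}_{(k)} = U_k D_k V_k^T$ is an ordinary SVD of a $p_k \times (p/p_k)$ matrix, and that the core-free factor $\mathcal{V}$ can be written compactly as $\mathcal{V} = (D_1^{-1} U_1^T,\ldots,D_K^{-1} U_K^T)\cdot \mathcal{X}$. Under the full-multilinear-rank hypothesis already invoked in the theorem, each $D_k$ has strictly positive, distinct diagonal entries almost surely, so the SVDs are unique up to signs and the implicit function theorem guarantees the relevant maps are smooth.

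First, I fix $k$ and differentiate $\mathcal{X}_{(k)} = U_k D_k V_k^T$ to obtain $d\mathcal{X}_{(k)} = (dU_k) D_k V_k^T + U_k (dD_k) V_k^T + U_k D_k (dV_k^T)$. From $U_k^T U_k = I$ and $V_k^T V_k = I$ (valid because $p_k \leq p/p_k$ makes $V_k$ semi-orthogonal), the matrices $A_k := U_k^T dU_k$ and $B_k := V_k^T dV_k$ are skew-symmetric. Premultiplying by $U_k^T$ and postmultiplying by $V_k$ yields $U_k^T d\mathcal{X}_{(k)} V_k = A_k D_k + dD_k - D_k B_k$. Reading off the diagonal gives $dD_k$ directly, and the paired off-diagonal entries $(i,j)$ and $(j,i)$ produce a $2 \times 2$ linear system in $(A_k)_{ij}$ and $(B_k)_{ij}$ whose determinant is $\sigma_j^2 - \sigma_i^2 \neq 0$; solving yields closed-form expressions, and $dU_k = U_k A_k$ follows. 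Because $V_k$ is rectangular, I also recover the component of $dV_k$ in the orthogonal complement of its column space by postmultiplying the original identity by $I - V_k V_k^T$, which kills every term involving $V_k^T$ and leaves $D_k (dV_k^T)(I - V_k V_k^T) = U_k^T d\mathcal{X}_{(k)} (I - V_k V_k^T)$, giving $(I - V_k V_k^T) dV_k = (I - V_k V_k^T)(d\mathcal{X}_{(k)})^T U_k D_k^{-1}$.

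Finally, I obtain $d\mathcal{V}$ by differentiating $\mathcal{V} = (D_1^{-1} U_1^T,\ldots,D_K^{-1} U_K^T)\cdot \mathcal{X}$ using the multilinearity of the Tucker product:
\begin{align*}
d\mathcal{V} = (D_1^{-1} U_1^T,\ldots,D_K^{-1} U_K^T)\cdot d\mathcal{X} + \sum_{k=1}^K \bigl(\ldots,\, d(D_k^{-1} U_k^T),\,\ldots\bigr)\cdot \mathcal{X},
\end{align*}
together with $d(D_k^{-1} U_k^T) = -D_k^{-1} (dD_k) D_k^{-1} U_k^T + D_k^{-1} (dU_k)^T$; substituting the expressions derived in the previous step yields (\ref{equation:d.V}). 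I expect the main obstacle to be purely notational: carefully tracking the Kronecker factors $U_{-k}$ that appear when matricizing Tucker products along different modes, and making sure the ``null-space'' contribution from the rectangular $V_k$ is correctly threaded into $d\mathcal{V}$ (it is this piece that captures the dependence of $\mathcal{V}$ on directions of $d\mathcal{X}$ orthogonal to the column space of $U_{-k}$). Everything else is a routine, if lengthy, bookkeeping exercise once the $2 \times 2$ SVD system has been solved.
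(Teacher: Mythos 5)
Your proposal is correct and follows essentially the same route as the paper: the paper also obtains $dD_k$ and $dU_k$ from the mode-$k$ matrix SVD $\mathcal{X}_{(k)} = U_kD_kV_k^T$ (quoting the closed forms of \cite{candes2013unbiased} with $V_k = U_{-k}\mathcal{S}_{(k)}^TD_k^{-1}$ substituted, which your skew-symmetric $2\times 2$ system simply re-derives from scratch), and it obtains $d\mathcal{V}$ by the product rule on the Tucker relation --- differentiating $\mathcal{X} = (U_1,\ldots,U_K)\cdot(D_1,\ldots,D_K)\cdot\mathcal{V}$ and solving for $d\mathcal{V}[\Delta]$, which is algebraically identical to your direct differentiation of $\mathcal{V} = (D_1^{-1}U_1^T,\ldots,D_K^{-1}U_K^T)\cdot\mathcal{X}$ and yields the same three terms in (\ref{equation:d.V}). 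One minor simplification you can make: since $\mathcal{V}$ depends only on the $U_k$'s, $D_k$'s, and $\mathcal{X}$ (each $U_k$ being square orthogonal), the null-space component $(I - V_kV_k^T)dV_k$ you recover is never needed and the worry about threading it into $d\mathcal{V}$ is moot.
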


An outline of the derivation is as follows: Because each $U_k$ and
$D_k$ from the HOSVD is from the SVD of $\mathcal{X}_{(k)} =
U_kD_kV_k^T$, the calculation begins by recognizing that the
differentials of the $U_k$'s and the $D_k$'s are the same as in the
matrix case. The differentials can then be re-written as functions of
the terms in the HOSVD. To obtain the differential of $\mathcal{V}$,
we write $\mathcal{X} =
(U_1,\ldots,U_K)\cdot(D_1,\ldots,D_K)\cdot\mathcal{V}$ and apply the
chain rule to each $U_k$, each $D_k$, then to $\mathcal{V}$. We then
solve for the differential of $\mathcal{V}$, which may be written in
terms of the differentials of the $U_k$'s and the $D_k$'s.

\begin{proof}[Proof of Theorem \ref{theorem:diff}]
  Denote the differential of a function $g$ at $\mathcal{X}$ with
  increment $\Delta$ as $dg[\Delta]$. Since $U_k$ and $D_k$ are the
  left singular vectors and the singular values, respectively, of
  $\mathcal{X}_{(k)}$ for each $k = 1,\ldots,K$, the differentials,
  $dU_k[\Delta]$ and $dD_k[\Delta]$, are the same as in
  \cite{candes2013unbiased} and have a closed form solution, given by
  \begin{align}
    \label{equation:d.D}
    d\sigma_i^k[\Delta] = (U_k^T\Delta_{(k)}U_{-k}\mathcal{S}_{(k)}D_k^{-1})_{[i,i]} \text{ for } i =1,\ldots,p_k \text{ and } k=1,\ldots,K,
  \end{align}
  where
  \begin{align*}
    U_{-k} = U_K\otimes \cdots \otimes U_{k+1} \otimes U_{k-1}\otimes\cdots\otimes U_1.
  \end{align*}
  This follows because the SVD of $\mathcal{X}_{(k)}$ is $U_kD_kV_k^T
  = U_k\mathcal{S}_{(k)}U_{-k}^T$ which implies that $V_k =
  U_{-k}\mathcal{S}_{(k)}^TD_{k}^{-1}$. We plug in $V_k$ into equation
  (4.7) of \cite{candes2013unbiased} to get (\ref{equation:d.D}).

  Let $\Omega_{U_k}[\Delta] = U_k^TdU_k[\Delta]$. Then from (4.8) of
  \cite{candes2013unbiased} we have
  \begin{align}
    \begin{split}
      \label{equation:d.omega}
      &\Omega_{U_k}[\Delta]_{[i,j]} \\
      &= - 1(i \neq j)\left[\sigma_j^k(U_k^T\Delta_{(k)}U_{-k}S_{(k)}^TD_k^{-1})_{[i,j]} + \sigma_i^k(U_k^T\Delta_{(k)}U_{-k}S_{(k)}^TD_k^{-1})_{[j,i]}\right]/((\sigma_i^k)^2 - (\sigma_j^k)^2),
    \end{split}
  \end{align}
  and so
  \begin{align}
    \label{equation:d.U}
    dU_k[\Delta] = U\Omega_{U_k}[\Delta].
  \end{align}

  We now derive $d\mathcal{V}[\Delta]$. Let $U = (U_1,\ldots,U_K)$ and
  $D = (D_1,\ldots,D_K)$. Also note that $d\mathcal{X}[\Delta] =
  \Delta$. Using the chain rule, and following Chapter 8, Section 1,
  Equations (15) and (16) of \cite{magnus1999matrix} for the
  differential of matrix multiplication and the Kronecker product, we
  have
  \begin{align}
    \Delta = d\mathcal{X}[\Delta] &= d(U\cdot D \cdot\mathcal{V})[\Delta] \nonumber\\
    &=\sum_{k=1}^Kd\underline{U}_k[\Delta] \cdot D \cdot \mathcal{V} + \sum_{k=1}^K U \cdot d\underline{D}_k[\Delta] \cdot \mathcal{V} + U \cdot D \cdot d\mathcal{V}[\Delta], \label{equation:chain.rule}
  \end{align}
  where
  \begin{align}
    &d\underline{U}_k[\Delta] = (U_1,\ldots,U_{k-1},dU_k[\Delta],U_{k+1},\ldots,U_K) \text{ and} \label{equation:U.underline}\\
    &d\underline{D}_k[\Delta] = (D_1,\ldots,D_{k-1},dD_k[\Delta],D_{k+1},\ldots,D_K).
  \end{align}

  From (\ref{equation:chain.rule}), we solve for
  $d\mathcal{V}[\Delta]$ and have
  \begin{align}
    \label{equation:d.V}
    d\mathcal{V}[\Delta] = D^{-1} \cdot U^T \cdot \Delta - \sum_{k=1}^K dF_k[\Delta]\cdot \mathcal{V} - \sum_{k=1}^K dG_k[\Delta] \cdot \mathcal{V},
  \end{align}
  where
  \begin{align}
    \label{equation:d.FG}
    &dF_k[\Delta] = (I_{p_1},\ldots,I_{p_{k-1}},D_k^{-1}\Omega_{U_k}[\Delta]D_k,I_{p_{k+1}},\ldots,I_{p_K}) \text{ and}\\
    &dG_k[\Delta] = (I_{p_1},\ldots,I_{p_{k-1}},D_k^{-1}dD_k[\Delta],I_{p_{k+1}},\ldots,I_{p_K}).
  \end{align}
\end{proof}

\subsection{Divergence of higher-order spectral estimators}
\label{subsection:div}
In this section, we show that the divergence of higher-order spectral
estimators of the form (\ref{equation:ho.spect.est}) can be found in
the following theorem.
\begin{theorem}
  The divergence of estimators of the form
  (\ref{equation:ho.spect.est}) is
\begin{align}
  \sumo\left(f(D) \cdot D^{-1}\cdot \mathcal{C} + \sum_{k=1}^KH_k \cdot \mathcal{S}^2\right), \label{equation:divergence.final}
\end{align}
where $\sumo(\mathcal{A})$ is the sum of all elements in the tensor
$\mathcal{A}$, $\mathcal{S}^2 \in \mathbb{R}^{p_1\times\cdots\times
  p_K}$ such that $(\mathcal{S}^2)_{[\mathbf{i}]} =
(\mathcal{S}_{[\mathbf{i}]})^2$,
\begin{align}
  \label{equation:H.k}
  H_k = (f^1(D_1)D_1^{-1},\ldots,f^{k-1}(D_{k-1})D_{k-1}^{-1},D_{k}^{-1}df^k(D_k)D_k^{-1},f^{k+1}(D_{k+1}),\ldots,f^K(D_K)),
\end{align}
and $\mathcal{C} \in \mathbb{R}^{p_1\times\cdots\times p_K}$ such that
\begin{align}
  \label{equation:C.array}
  \mathcal{C}_{[{\bf i}]} =  1 + \sum_{k=1}^K \sum_{j = 1, j\neq i_k}^{p_k} \frac{\mathcal{S}_{[i_1,\ldots,i_{k-1},j,i_{k+1},\ldots,i_K]}^2}{(\sigma_{i_k}^k)^2 - (\sigma_{j}^k)^2} - \mathcal{S}_{\bf [i]}^2\sum_{k=1}^K\left( \frac{1}{(\sigma_{i_k}^k)^2} + \sum_{m=1,m\neq i_k}^{p_k}\frac{1}{(\sigma_m^k)^2 - (\sigma_{i_k}^k)^2}  \right).
\end{align}
\end{theorem}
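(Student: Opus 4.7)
The plan is to identify the divergence with the trace of the Jacobian of the linear map $\Delta \mapsto dt[\Delta]$ on $\mathbb{R}^{p_1\times\cdots\times p_K}$,
$$\diverge(t(\mathcal{X})) \;=\; \sum_{\mathbf{i}} \frac{\partial t_{\mathbf{i}}}{\partial \mathcal{X}_{[\mathbf{i}]}} \;=\; \tr(dt[\cdot]),$$
and then compute this trace by expanding $dt[\Delta]$ with the chain rule, substituting the differentials supplied by Theorem \ref{theorem:diff}, and summing over all resulting Tucker-type pieces. A useful guiding identity is that for a pure Tucker action $\Delta \mapsto (A_1,\ldots,A_K)\cdot \Delta$ with constant $A_k$, the trace is $\prod_k \tr(A_k)$, equivalently $\sumo$ applied to the all-ones tensor scaled through each mode by $A_k$. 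This is the tensor analogue of what is done in \cite{candes2013unbiased} for the matrix SVD.

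First I would apply the product rule (\ref{equation:chain.rule}) to $t(\mathcal{X}) = U \cdot f(D) \cdot \mathcal{V}$, obtaining
$$dt[\Delta] \;=\; \sum_{k=1}^K d\underline{U}_k[\Delta]\cdot f(D) \cdot \mathcal{V} \;+\; \sum_{k=1}^K U\cdot d\underline{f(D)}_k[\Delta]\cdot \mathcal{V} \;+\; U \cdot f(D) \cdot d\mathcal{V}[\Delta].$$
Substituting (\ref{equation:d.V}) for $d\mathcal{V}[\Delta]$ and using $\mathcal{V} = D^{-1}\cdot \mathcal{S}$ everywhere, I can rewrite the last term as
$$U\cdot f(D)\cdot D^{-1}\cdot U^T\cdot \Delta \;-\; U\cdot f(D)\cdot\sum_k dF_k[\Delta]\cdot D^{-1}\cdot \mathcal{S} \;-\; U\cdot f(D)\cdot\sum_k dG_k[\Delta]\cdot D^{-1}\cdot \mathcal{S}.$$
Every remaining $\Delta$-dependence is now packaged inside the scalars $d\sigma_i^k[\Delta]$ and the skew matrices $\Omega_{U_k}[\Delta]$, both of which are known linearly in $\Delta$ from (\ref{equation:d.D}) and (\ref{equation:d.omega}).

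Next I would take the trace of each of the five groups of terms separately. The pure Tucker term $U\cdot f(D)\cdot D^{-1}\cdot U^T \cdot \Delta$ contributes $\prod_k \tr(f^k(D_k)D_k^{-1}) = \sum_{\mathbf{i}}\prod_k f^k_{i_k}(\sigma_{i_k}^k)/\sigma_{i_k}^k$, which is exactly the ``$1$'' part of $\mathcal{C}$ after pulling out the diagonal factor $f(D)\cdot D^{-1}$. For the $d\underline{U}_k$ and $dF_k$ pieces, $dU_k[\Delta] = U_k\Omega_{U_k}[\Delta]$ has vanishing diagonal by (\ref{equation:d.omega}), so only the off-diagonal entries $i\neq j$ survive in the trace pairing; after substituting the expression for $\Omega_{U_k}[\Delta]_{[i,j]}$ and pairing the two off-diagonal indices against the appropriate slices of $\mathcal{S}$, these terms generate the $\sum_{j\neq i_k}\mathcal{S}^2_{[i_1,\ldots,j,\ldots,i_K]}/((\sigma_{i_k}^k)^2-(\sigma_j^k)^2)$ contribution to $\mathcal{C}$. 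The $dG_k$ pieces, which carry the factor $D_k^{-1}dD_k[\Delta]$, produce via (\ref{equation:d.D}) the remaining correction $-\mathcal{S}^2_{[\mathbf{i}]}\sum_k\bigl(1/(\sigma_{i_k}^k)^2 + \sum_{m\neq i_k}1/((\sigma_m^k)^2-(\sigma_{i_k}^k)^2)\bigr)$ in $\mathcal{C}$. Finally, the middle sum $\sum_k U\cdot d\underline{f(D)}_k[\Delta]\cdot \mathcal{V}$, once $df^k(D_k)[\Delta]$ is expressed through $dD_k[\Delta]$ and the factor $D^{-1}$ from $\mathcal{V} = D^{-1}\cdot \mathcal{S}$ is split between the two sides of the trace pairing, delivers precisely $\sumo(\sum_k H_k \cdot \mathcal{S}^2)$, with the asymmetric placement of $D_{k'}^{-1}$ in (\ref{equation:H.k}) explained by which copy of $D^{-1}$ is consumed by the trace.

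The main obstacle will be the bookkeeping in the trace step: carefully tracking, for each summand, which mode-index is being traced, which off-diagonal pair $(i,j)$ within $\Omega_{U_k}$ contributes, and how the quadratic factors of $\mathcal{S}$ assemble into $\mathcal{S}^2$. In particular, I expect nontrivial effort in showing that the antisymmetric piece of $\Omega_{U_k}[\Delta]$ combines across $d\underline{U}_k$ and $dF_k$ so that the only surviving poles in $\mathcal{C}$ are the symmetric $((\sigma_{i_k}^k)^2 - (\sigma_j^k)^2)^{-1}$ terms, and in checking that the $dG_k$ and $d\underline{f(D)}_k$ contributions separate cleanly into $\mathcal{C}$ and $H_k$ without cross-terms. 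Once these combinatorial identifications are made, the stated formula (\ref{equation:divergence.final}) follows.
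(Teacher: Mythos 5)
Your proposal follows essentially the same route as the paper's proof: the paper also computes the divergence as the trace $\sum_{\mathbf{i}}\langle \Delta^{\mathbf{i}}, dt[\Delta^{\mathbf{i}}]\rangle$, evaluated in the adapted orthonormal basis $\Delta^{\mathbf{i}} = U_{1[:,i_1]}\circ\cdots\circ U_{K[:,i_K]}$ (so that $(U_1^T,\ldots,U_K^T)\cdot\Delta^{\mathbf{i}} = E^{\mathbf{i}}$), expands $dt$ by the product rule into the same three groups of terms, substitutes the differentials (\ref{equation:d.D}), (\ref{equation:d.omega}), and (\ref{equation:d.V}), and then simplifies (Appendix \ref{sec:simp.div}). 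Your pure-Tucker trace identity correctly accounts for the ``$1$'' in $\mathcal{C}$, and your assignment of the off-diagonal $\Omega_{U_k}$ pairings to the $\mathcal{S}^2_{[i_1,\ldots,j,\ldots,i_K]}/((\sigma_{i_k}^k)^2-(\sigma_j^k)^2)$ terms matches the paper.

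Two bookkeeping claims in your sketch are off, though neither is fatal once the algebra is carried out. First, you attribute the entire correction $-\mathcal{S}^2_{[\mathbf{i}]}\bigl(1/(\sigma_{i_k}^k)^2 + \sum_{m\neq i_k}1/((\sigma_m^k)^2-(\sigma_{i_k}^k)^2)\bigr)$ to the $dG_k$ pieces; in fact $dG_k$ contributes only the $1/(\sigma_{i_k}^k)^2$ part (see (\ref{equation:dv.3})), while the $\sum_{m\neq i_k}$ poles arise from re-indexing (swapping $j$ and $i_k$) the off-diagonal sums generated by $d\underline{U}_k$ and $dF_k$ --- the ``rearranging summands'' step in the appendix. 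Second, your rationalization of the asymmetric placement of $D_{k'}^{-1}$ in (\ref{equation:H.k}) as reflecting ``which copy of $D^{-1}$ is consumed by the trace'' has no basis: carrying out the middle term as in (\ref{equation:final.form.second.term}) yields the symmetric product $\bigl(\prod_{j\neq k}f_{i_j}^j(\sigma_{i_j}^j)/\sigma_{i_j}^j\bigr)\bigl(\tfrac{d}{d\sigma_{i_k}^k}f_{i_k}^k(\sigma_{i_k}^k)\bigr)\mathcal{S}^2_{[\mathbf{i}]}/(\sigma_{i_k}^k)^2$, so every mode $k'\neq k$ carries $f^{k'}(D_{k'})D_{k'}^{-1}$; the missing $D_{k'}^{-1}$ factors for $k' > k$ in the displayed $H_k$ are a typographical slip in the statement (both the appendix derivation and the SURE optimization formula use the symmetric version), not a trace phenomenon your argument should try to reproduce. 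With these two attributions corrected, your plan coincides with the paper's proof.
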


\begin{proof}
  Let
  \begin{align*}
    \Delta^{i_1,\ldots,i_K} = \Delta^{\bf i} = U_{1[:,i_1]} \circ \cdots \circ U_{K[:,i_K]},
  \end{align*}
  where $\circ$ is the outer product and $U_{k[:,i_k]}$ is the $i_k$th
  column of $U_k$. Note that
  \begin{align*}
    (U_1^T,\ldots,U_K^T)\cdot\Delta^{\bf i} = E^{\bf i},
  \end{align*}
  where $E^{\bf i}$ is the $p_1\times\cdots\times p_K$ array with a
  one in position $(i_1,\ldots,i_K)$ and zeros everywhere
  else. Similar to the arguments of \cite{candes2013unbiased}, also
  note that $\Delta^{\bf i}$ forms an orthonormal basis for
  $\mathbb{R}^{p_1\times\cdots\times p_K}$, and so
  \begin{align}
    \diverge(t(\mathcal{X})) &= \sum_{\bf i} \langle\Delta^{\bf i}, df[\Delta^{\bf i}]\rangle \nonumber\\
    &= \sum_{\bf i} \langle(U_1^T,\ldots,U_K^T)\cdot\Delta^{\bf i}, (U_1^T,\ldots,U_K^T)\cdot df[\Delta^{\bf i}]\rangle\nonumber\\
    &= \sum_{\bf i} \langle E^{\bf i}, (U_1^T,\ldots,U_K^T)\cdot df[\Delta^{\bf i}]\rangle,\nonumber\\
    \label{equation:divergence.index}
    &= \sum_{\bf i} ((U_1^T,\ldots,U_K^T)\cdot df[\Delta^{\bf i}])_{[\mathbf{i}]},
  \end{align}
  where $\langle , \rangle$ is the usual Euclidean inner product. From
  the chain rule, we have:
  \begin{align*}
    df[\Delta^{\bf i}] = \sum_{k=1}^Kd\underline{U}_k[\Delta^{\bf i}] \cdot f(D) \cdot \mathcal{V} + \sum_{k=1}^K U \cdot df(\tilde{D})_k[\Delta^{\bf i}] \cdot \mathcal{V} + U \cdot f(D) \cdot d\mathcal{V}[\Delta^{\bf i}],
  \end{align*}
  where
  \begin{align*}
    &f(D) = (f^1(D_1),\ldots,f^K(D_K)) \text{ and}\\
    &df(\tilde{D})_k[\Delta^{\bf i}] = (f^1(D_1),\ldots,f^{k-1}(D_{k-1}),d(f^k \circ D_k)[\Delta^{\bf i}],f^{k+1}(D_{k+1}),\ldots,f^K(D_K)),
  \end{align*}
  where ``$\circ$'' now means composition. Hence,
  \begin{align}
    \label{equation:u.times.df}
    U^T \cdot df[\Delta^{\bf i}] = \sum_{k=1}^Kd\tilde{U}_k[\Delta^{\bf i}] \cdot f(D) \cdot \mathcal{V} + \sum_{k=1}^K df(\tilde{D})_k[\Delta^{\bf i}] \cdot \mathcal{V} + f(D) \cdot d\mathcal{V}[\Delta^{\bf i}],
  \end{align}
  where
  \begin{align}
    d\tilde{U}_k[\Delta^{\bf i}] = (I_{p_1},\ldots,I_{p_{k-1}},\Omega_{U_k}[\Delta^{\bf i}],I_{p_{k+1}},\ldots,I_{p_K}). \label{equation:U.tilde}
  \end{align}

  The outline of the derivation of the divergence is as follows. The
  ultimate goal is to obtain the $(i_1,\ldots,i_K)$th element of $U^T
  \cdot df[\Delta^{\bf i}]$ in (\ref{equation:u.times.df}) and plug
  that into (\ref{equation:divergence.index}). We will first calculate
  all of the differentials that are in (\ref{equation:u.times.df}),
  then we will determine the $(i_1,\ldots,i_K)$th element of $U^T
  \cdot df[\Delta^{\bf i}]$. Then we will simplify
  (\ref{equation:divergence.index}). These latter two steps may be
  found in Appendix \ref{sec:simp.div}.

  We begin with the differentials. From (\ref{equation:d.D}), we have
  \begin{align}
    d\sigma_j^k[\Delta^{\bf i}] &= (U_k^T \Delta_{(k)}^{\bf i} U_{-k} S_{(k)}^T D_k^{-1})_{[j,j]} \nonumber\\
    &= (E^{\bf i}_{(k)} S_{(k)}^T D_k^{-1})_{[j,j]}\nonumber\\
    \label{equation:d.sigma.i}
    &= 1(j = i_k)S_{[i_1,\ldots,i_{k-1},j,i_{k+1},\ldots,i_K]}/\sigma_j^k.
  \end{align}
  This is since $E_{(k)}^{\bf i}S_{(k)}^T \in \mathbb{R}^{p_k\times p_k}$ such that
  \begin{align}
    \left(E_{(k)}^{\bf i}S_{(k)}^T\right)_{[\ell,j]} = \label{equation:Ek.Sk}
    \begin{cases}
      0 &\text{if } \ell \neq i_k\\
      S_{[i_1,\ldots,i_{k-1},j,i_{k+1},\ldots,i_K]} &\text{if } \ell = i_k.
    \end{cases}
  \end{align}
  Similarly, from (\ref{equation:d.omega}), we have
  \begin{align}
    &\Omega_{U_k}[\Delta^{\bf i}]_{[\ell, j]} \nonumber\\
    &= -1(\ell\neq j)\left[\sigma_j^k(U_k^T\Delta_{(k)}U_{-k}S_{(k)}^TD_k^{-1})_{[\ell,j]} + \sigma_{\ell}^k(U_k^T\Delta_{(k)}U_{-k}S_{(k)}^TD_k^{-1})_{[j,\ell]}\right]/((\sigma_{\ell}^k)^2 - (\sigma_j^k)^2)\nonumber\\
    &= -1(\ell\neq j)\left[\sigma_j^k(E^{\bf i}_{(k)}S_{(k)}^TD_k^{-1})_{[\ell,j]} + \sigma_{\ell}^k(E^{\bf i}_{(k)}S_{(k)}^TD_k^{-1})_{[j,\ell]}\right]/ ((\sigma_{\ell}^k)^2 - (\sigma_j^k)^2)\nonumber\\
    \label{equation:d.omega.i}
    &= -1(\ell\neq j)\left[S_{[i_1,\ldots,i_{k-1},j,i_{k+1},\ldots,i_K]}1(\ell = i_k) + S_{[i_1,\ldots,i_{k-1},\ell,i_{k+1},\ldots,i_K]}1(j=i_k)\right]/((\sigma_{\ell}^k)^2 - (\sigma_j^k)^2).
  \end{align}
  Also, from the chain rule, we have that
  \begin{align}
    d(f^k_j \circ \sigma^k_j)[\Delta^{\bf i}] &= \left(\frac{d}{d\sigma_j^k}f_j^k(\sigma_j^k)\right)d\sigma_j^k[\Delta^{\bf i}]\nonumber\\
    \label{equation:d.comp}
    &= \delta_{j,i_k}\left(\frac{d}{d\sigma_j^k}f_j^k(\sigma_j^k)\right)S_{[i_1,\ldots,i_{k-1},j,i_{k+1},\ldots,i_K]}/\sigma_j^k.
  \end{align}

  We have just completed all of the calculus necessary to obtain the
  divergence, and the remainder of the calculation is
  simplification. That is, we can use equations (\ref{equation:d.V}),
  (\ref{equation:divergence.index}), (\ref{equation:u.times.df}),
  (\ref{equation:d.sigma.i}), (\ref{equation:d.omega.i}), and
  (\ref{equation:d.comp}) to calculate a closed-form expression for
  the divergence. This simplification is relegated to Appendix
  \ref{sec:simp.div}.
\end{proof}


We now present the formula for the SURE for all higher-order spectral
estimators of the form (\ref{equation:ho.spect.est}):
\begin{theorem}[SURE for (\ref{equation:ho.spect.est})]
  Under the model (\ref{equation:normal.model}), suppose $t(\cdot)$ in
  (\ref{equation:ho.spect.est}) is almost differentiable and for which
  (\ref{equation:int.condition}) holds. Then
  \begin{align}
    \sure(t) =  ||t(\mathcal{X}) - \mathcal{X}||^2 + 2\tau^2\sumo\left(f(D) \cdot D^{-1}\cdot \mathcal{C} + \sum_{k=1}^KH_k \cdot \mathcal{S}^2\right) - p\tau^2. \label{equation:complete.sure}
  \end{align}
\end{theorem}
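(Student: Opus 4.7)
The approach here is to recognize that this theorem is a direct corollary obtained by substituting the divergence formula from the preceding theorem into Stein's SURE identity stated in Theorem~\ref{theorem:stein.theorem}. All of the hard calculus has already been carried out: the differentials of the altered HOSVD components (Theorem~\ref{theorem:diff}), the subsequent closed-form expression for $\diverge(t(\mathcal{X}))$ given in (\ref{equation:divergence.final}), and the Stein identity itself. Nothing new needs to be computed.

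Concretely, the plan is as follows. First, I would invoke Theorem~\ref{theorem:stein.theorem}. Its two hypotheses—almost differentiability of $t(\cdot)$ and the integrability condition (\ref{equation:int.condition})—are assumed explicitly in the statement of the present theorem, so they are directly available. This yields
\begin{align*}
\sure(t) = \|t(\mathcal{X}) - \mathcal{X}\|^2 + 2\tau^2\,\diverge(t(\mathcal{X})) - p\tau^2.
\end{align*}
Second, I would substitute the explicit divergence expression (\ref{equation:divergence.final}) from the preceding theorem,
\begin{align*}
\diverge(t(\mathcal{X})) = \sumo\!\left(f(D)\cdot D^{-1}\cdot\mathcal{C} + \sum_{k=1}^K H_k\cdot\mathcal{S}^2\right),
\end{align*}
with $\mathcal{C}$ and $H_k$ as defined in (\ref{equation:C.array}) and (\ref{equation:H.k}). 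This substitution produces (\ref{equation:complete.sure}) verbatim.

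There is no genuine obstacle in the argument: the content lies entirely in the differentials computed in Theorem~\ref{theorem:diff} and in the simplification that produces (\ref{equation:divergence.final}). The only point worth a brief remark—though not strictly required, since it is assumed—is that for the specific shrinkage functions of interest in this paper (the truncation estimator (\ref{equation:trunc.shrink}) and the mode-specific soft-thresholding estimator (\ref{equation:mode.specific.soft})), almost differentiability holds because the non-smooth points form a Lebesgue-null set, and the integrability condition is inherited from the boundedness of the derivatives of $f_i^k(\cdot)$ on the complement of that null set. Given this, the proof is a one-line substitution.
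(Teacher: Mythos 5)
Your proposal is correct and matches the paper exactly: the paper gives no separate proof of this theorem because, as you say, it follows immediately by substituting the divergence formula (\ref{equation:divergence.final}) into Stein's identity (\ref{equation:sure.form}) from Theorem~\ref{theorem:stein.theorem}, whose hypotheses are assumed in the statement. Your added remark on why the truncation and soft-thresholding estimators satisfy almost differentiability and the integrability condition is a harmless (and reasonable) aside beyond what the paper records.
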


This SURE formula is applicable for all shrinkage functions of the
form (\ref{equation:ho.spect.est}) where $f^k(D_k) =
\diag(f_1^k(\sigma_1^k),\ldots,f_{p_k}^k(\sigma_{p_k}^k))$. For such
shrinkage functions, the shrinkage being applied to each singular
value is a function only of that singular value. However, it is
possible to construct estimators which use all of the mode $k$
singular values to shrink each mode $k$ singular value, e.g.\ if we
were to use a shrinkage function analogous to those of
(\ref{equation:stein}) or (\ref{equation:improved.em}). For such
estimators, we prove in Appendix \ref{sec:gen.spec.func} that the form
of the divergence is very similar as in
(\ref{equation:divergence.final}). The only difference is that one
replaces $\frac{d}{d\sigma_{i_k}^k}f_{i_k}^k(\sigma_{i_k}^k)$ with
$\frac{d}{d\sigma_{i_k}^k}f_{i_k}^k(\sigma_{1}^k,\ldots,\sigma_{p_k}^k)$. That
is, for such shrinkage functions, $df^k(D_k)$ is a diagonal matrix
containing only the diagonal of the Jacobian matrix of the
transformation $\diag(D_k) \mapsto \diag(f(D_k))$.

\section{Simulation studies}
\label{sec:simulation}

In this section, we consider four competitors to the mode-specific
soft-thresholding estimator (\ref{equation:msst.est}) and the
truncated HOSVD (\ref{equation:trunc.shrink}). We will compare these
estimators assuming the error variance $\tau^2$ is one. The first
competitor is $\mathcal{X}$, which is the maximum likelihood estimator
and the uniformly minimum variance unbiased estimator. However, the
risk-performance of this estimator is known to be dominated by our
second competitor, the James-Stein estimator (\ref{equation:stein})
\citep{stein1981estimation}.  This estimator may be derived from an
empirical Bayes argument where $\Theta_{[\mathbf{i}]} \sim
N(0,\gamma^2)$ \citep{efron1972limiting}. As such, it should perform
well when the entries of $\Theta$ are centered about $0$. For a matrix
parameter $\Theta$, \cite{efron1972empirical} developed an empirical
Bayes estimator that performs better than the James-Stein estimator
when $\Theta$ exhibits empirical correlation along the rows. With this
in mind, our third estimator is obtained by applying the Efron-Morris
estimator (\ref{equation:efron.morris}) to the first mode
matricization of the data tensor. However, the Efron-Morris estimator
does not induce low rank estimates, and so our fourth and final
competitor is the matrix soft-thresholding estimator
(\ref{equation:soft.thresholding}) applied to the first mode
matricization of $\mathcal{X}$, and whose tuning parameter is chosen
with the SURE formula from \cite{candes2013unbiased}. This estimator
should improve on the Efron-Morris estimator when $\Theta_{(1)}$ has
approximately low rank.

We now describe the design of the simulation study. We evaluated the
risk of the mode-specific soft-thresholding, truncated HOSVD, maximum
likelihood, James-Stein, Efron-Morris, and matrix soft-thresholding
estimators under six different values of $\Theta \in \mathbb{R}^{10
  \times 10 \times 10}$, constructed as follows:
\begin{description}[noitemsep]
\item[A.] $\veco(\Theta) \sim N_p(0,I_{1000})$.
\item[B.] $\veco(\Theta) \sim N_p(0,I_{10} \otimes I_{10} \otimes F)$, where $F =
  \diag(1^2,2^2,\ldots,10^2)$.
\item[C.] $\veco(\Theta) \sim N_{1000}(0,I_{10}\otimes I_{10}\otimes \Sigma)$ where
  $\Sigma \in \mathbb{R}^{10\times 10}$ has an AR-1 $(0.7)$ covariance structure. That is,
  $\Sigma_{[i,j]} = 0.7^{|i - j|}$.
\item[D.] $\Theta_{(1)} = U_{[:,1:5]}D_{[1:5,1:5]}V_{[:,1:5]}^T$ where
  $UDV^T$ is the SVD of a $10 \times 100$ matrix that has standard
  normal entries.
\item[E.] $\veco(\Theta) \sim N_p(0,F \otimes F \otimes F)$, where $F =
  \diag(1^2,2^2,\ldots,10^2)$.
\item[F.] $\Theta$ is a rank $(5,5,5)$ tensor where all of the non-zero mode-specific
  singular values are the same along all modes.
\end{description}
For each scenario, we re-scaled $\Theta$ to have Frobenius norm
$\sqrt{1000}$, so that $E[||\mathcal{E}||^2] = 1000 =
||\Theta||^2$. For each $\Theta$, we simulated
$\mathcal{X}_{[\mathbf{i}]} \sim N(\Theta_{[\mathbf{i}]},1)$,
calculated the six estimators given this data tensor, and calculated
the squared error loss for each estimator. We repeated this process
500 times. Box plots of the losses for each of the six $\Theta$ values
are given in Figure \ref{fig:sim.results}.

The James-Stein estimator (\ref{equation:stein}) is expected to
perform well in Scenario \textbf{A} as it can be viewed as an
empirical Bayes procedure for the prior with which $\Theta$ was
actually generated.
Indeed, from Figure \ref{fig:sim.results} (\textbf{A}), the
James-Stein estimator does perform best, but the mode-specific
soft-thresholding estimator performs almost as well, even though there
is no correlation along any of the modes of the mean tensor.

For scenario \textbf{B}, we expect the matrix soft-thresholding
estimator (\ref{equation:soft.thresholding}) to do well. Since the
mean tensor in this scenario has approximately low rank only along the
first mode, estimators that shrink towards the space of low
multilinear rank tensors should be over-fitting and should not perform
well. From Figure \ref{fig:sim.results} (\textbf{B}), the matrix
soft-thresholding estimator does perform best, but surprisingly, the
mode-specific soft-thresholding estimator does equally well.

For Scenario \textbf{C}, we expect the matrix soft-thresholding
estimator (\ref{equation:soft.thresholding}) and the Efron-Morris
estimator (\ref{equation:efron.morris}) to perform well. There is
temporal correlation along one of the modes of the mean tensor. We
take into account the temporal correlation of the mean by performing
soft-thresholding along this mode. However, from Figure
\ref{fig:sim.results} (\textbf{C}), we see that the mode-specific
soft-thresholding estimator performed best.

The matrix soft-thresholding estimator
(\ref{equation:soft.thresholding}) was designed to do well when the
mean matrix is of low rank. This is exactly the situation in Scenario
\textbf{D}, as a tensor with low rank along one mode may be matricized
to form a low rank matrix. However, from Figure \ref{fig:sim.results}
(\textbf{D}), for our one $\Theta$ value, the mode-specific
soft-thresholding estimator performs best.

As for Scenario \textbf{E}, we expect the mode-specific
soft-thresholding estimator (\ref{equation:msst.est}) to do well, as
the mean tensor has approximately low multilinear rank, but it is not
exactly low multilinear rank. Figure \ref{fig:sim.results}
(\textbf{E}) reveals the mode-specific soft-thresholding estimator
does indeed perform better than the other estimators.

We expect the truncated HOSVD (\ref{equation:trunc.shrink}) to do well
in Scenario \textbf{F} because the mean tensor has low multilinear
rank, and the truncated HOSVD is correctly shrinking toward this
structure. From Figure \ref{fig:sim.results} (\textbf{F}), we see that
the truncated HOSVD does indeed perform best in terms of
loss. However, the mode-specific soft-thresholding estimator does not
perform much worse. The estimators that do not take into account the
tensor indexing perform about twice as bad as these tensor-specific
estimators.

For scenarios \textbf{C} and \textbf{D}, we emphasize here that we are
looking at the risk only at a few points in the parameter space. There
are likely points where the matrix-soft thresholding estimator
performs better than the tensor estimators. However our mode-specific
soft-thresholding estimator did not perform poorly under any of our
simulated mean tensors.

\begin{figure*}
  \begin{center}
    \includegraphics{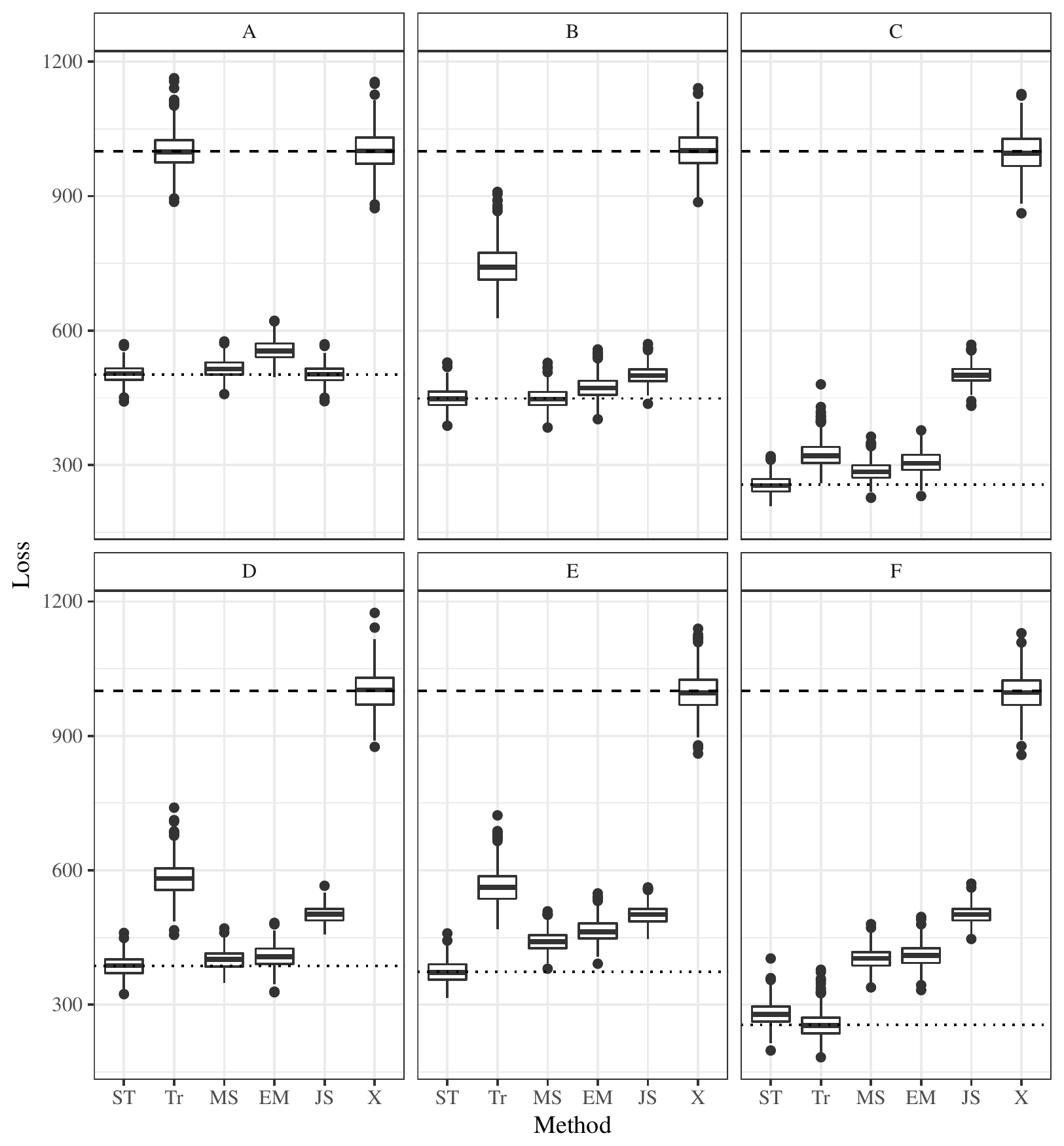}
    \caption{Box plots of losses for the six estimators under different
      scenarios. The estimators include the mode-specific
      soft-thresholding (ST), truncated HOSVD (Tr), matrix
      soft-thresholding (MS), Efron-Morris (EM), James-Stein (JS), and
      maximum likelihood (X) estimators. In the scenarios, the mean tensor
      was simulated to have (\textbf{A}) uncorrelated elements,
      (\textbf{B}) full rank but dispersed singular values only along mode
      1, (\textbf{C}) AR-1 covariance along mode 1, (\textbf{D}) low rank
      only along mode 1, (\textbf{E}) full rank but dispersed singular
      values along all modes, and (\textbf{F}) rank $(5,5,5)$ with all the
      same non-zero singular values.}
    \label{fig:sim.results}
  \end{center}
\end{figure*}

Our procedure for the truncated HOSVD produces a multilinear rank with
the smallest SURE. It is of interest to know if this multilinear rank
provides a good estimate of the true rank of $\Theta$. We evaluated
this possibility in simulation Scenarios \textbf{D} and \textbf{F}. In
Scenario \textbf{F}, where the tensor had dimension $(10,10,10)$ and
the true multilinear rank was $(5,5,5)$, this SURE method correctly
estimated the multilinear rank in 92.6\% of trials. In Scenario
\textbf{D}, where the true multilinear rank was $(5,10,10)$, the
results of the simulation study can be found in Table
\ref{tab:rank.est}. There, we see that the rank of the first mode is
correctly estimated in 97\% of trials. The rank of the second and
third modes are correctly estimated a majority of the time.

\begin{table}[ht]
\centering
\begin{tabular}{rlllllll}
  \hline
 Estimated Rank & 4 & 5 & 6 & 7 & 8 & 9 & 10 \\
  \hline
Mode 1 & .03 & .97 & 0 & 0 & 0 & 0 & 0 \\
  Mode 2 & 0 & 0 & .02 & .03 & .11 & .27 & .57 \\
  Mode 3 & 0 & 0 & 0 & .01 & .05 & .18 & .74 \\
   \hline
\end{tabular}
\caption{Proportion of times each rank is estimated based on SURE for each mode over 500 repetitions when the true multilinear rank is (5, 10, 10).}
\label{tab:rank.est}
\end{table}

\section{Multivariate relational data example}
\label{sec:NBA}

In this section, we demonstrate the applicability of our estimators to
multivariate relational data.  Such data may be viewed as a three-way
tensor $\mathcal{X}$ where entry $\mathcal{X}_{[i,j,k]}$ is the value
of relation type $k$ from node $i$ to node $j$. One example of such a
data set is a social network in which multiple types of relations are
measured between individuals.  As another example, in sports
statistics, round robin interaction data consist of outcomes of
competitions between teams. In this section we illustrate our methods
with round robin data from the 2014-2015 regular season of the
National Basketball Association (NBA). The NBA consists of a Western
conference and an Eastern conference of fifteen teams each, where
intra-conference play has three to four games per year per pair of
teams and inter-conference play is limited to two games a season per
pair of teams. For each conference, we created a four dimensional
tensor where element $\mathcal{Y}_{[i,j,k,\ell]}$ is statistic $k$
obtained by team $i$ while playing team $j$ either during team $i$'s
first home ($\ell = 1$) or first away ($\ell = 2$) game against team
$j$ during the season. The statistics we considered were free-throw
percentage, two-point field goal percentage, and three-point field
goal percentage. We thus have two tensors each of dimension $15 \times
15 \times 3 \times 2$, one for each of the two conferences.  In this
section, we illustrate the utility of tensor shrinkage by predicting
late season relational basketball statistics from early season
data. Our approach is analogous to that of \cite{efron1975data}, who
illustrated the utility of vector shrinkage estimation by predicting
late season baseball batting averages from data on early season
batting averages.

The statistics in our data set are all empirical proportions. We model
the elements of $\mathcal{Y}$ with a binomial model,
\begin{align*}
  n_{i,j,k,\ell}\mathcal{Y}_{[i,j,k,\ell]} \sim \Bin(n_{i,j,k,\ell},p_{i,j,k,\ell}),
\end{align*}
where all elements are independent, given the $p_{i,j,k,\ell}$'s. We
apply an arc-sin transformation to the data tensor to stabilize the
variance:
\begin{align*}
  \mathcal{X}_{[i,j,k,\ell]} = (n_{i,j,k,\ell})^{1/2}\arcsin(2\mathcal{Y}_{[i,j,k,\ell]} - 1).
\end{align*}
From the central limit theorem, we have approximately
\begin{align*}
  \mathcal{X}_{[i,j,k,\ell]} \sim N(\Theta_{[i,j,k,\ell]},1),
\end{align*}
where $\Theta_{[i,j,k,\ell]} =
(n_{i,j,k,\ell})^{1/2}\arcsin(2p_{i,j,k,\ell} - 1)$, resulting in the
model in (\ref{equation:normal.model}).

A commonly used representation of a mean tensor $\Theta$ is an ANOVA
decomposition, such as
\begin{align*}
  \Theta_{[i,j,k,\ell]} = \mu + \alpha_i + \beta_j + \gamma_k + \delta_{\ell} + \tilde{\Theta}_{[i,j,k,\ell]},
\end{align*}
where $\tilde{\Theta}_{[i,j,k,\ell]}$ contains all of the interaction
effects. Note that $\mathbf{1}_{p_1}^T\alpha = 0$,
$\mathbf{1}_{p_2}^T\beta = 0$, $\mathbf{1}_{p_3}\gamma = 0$, and
$\mathbf{1}_{p_4}^T\delta = 0$, where $\mathbf{1}_{p_k}$ is the vector
of ones of length $p_k$. The tensor $\tilde{\Theta}$ also satisfies
$\Tilde{\Theta}_{(k)}\mathbf{1}_{p/p_k} = 0$ for all $k =
1,2,3,4$. Suppose we obtain the maximum likelihood estimates of $\mu$,
$\alpha$, $\beta$, $\gamma$, and $\delta$ by fitting a main-effects
ANOVA model. We then calculate the residual tensor,
\begin{align*}
  \mathcal{R}_{[i,j,k,\ell]} =& \mathcal{X}_{[i,j,k,\ell]} -
  \frac{p_1}{p}\sum_{j',k',\ell'}\mathcal{X}_{[i,j',k',\ell']} -
  \frac{p_2}{p}\sum_{i',k',\ell'}\mathcal{X}_{[i',j,k',\ell']} -
  \frac{p_3}{p}\sum_{i',j',\ell'}\mathcal{X}_{[i',j',k,\ell']} \\
  &- \frac{p_4}{p}\sum_{i',j',k'}\mathcal{X}_{[i',j',k',\ell]} + \frac{3}{p}\sum_{i',j',k',\ell'}\mathcal{X}_{[i',j',k',\ell']}.
\end{align*}
This residual tensor has an expected value of $\tilde{\Theta}$.  It
was proposed in \cite{stein1966approach} and \cite{efron1972empirical}
that we estimate the interaction effects $\tilde{\Theta}$ with a
vector shrinkage-type estimator on the residuals. If the interactions
$\tilde{\Theta}$ are close to zero --- when the interaction effects
are small --- then such estimators will adaptively shrink the
residuals towards zero. However, these estimators were developed to
adapt to patterns in vectors or matrices of residuals, and not tensors
of residuals. In contrast, our approach should be able to adapt to
these patterns along any of the four modes of the residual tensor.

We applied mode-specific soft-thresholding and the truncated HOSVD to
the array of residuals $\mathcal{R}$ from the main effects ANOVA
model. These methods suggest that the residual tensor should be
heavily shrunk both towards zero and towards low multilinear rank
structure. For the West, the Frobenius norm of the residual tensor was
38.38, while the Frobenius norm of the resulting shrunken residual
tensor using the mode-specific soft-thresholding estimator was
7.81. In the East, the values were 38.95 and 6.97, respectively. We
also used SURE to estimate the multilinear rank of each residual
tensor using the truncated HOSVD. The estimated multilinear rank of
the residual tensor of the Western conference was $2 \times 3 \times 1
\times 2$, and for the Eastern conference the estimated multilinear
rank was $4 \times 2 \times 1 \times 1$. These are very small ranks
compared to the dimensions of the tensors $15 \times 15 \times 3
\times 2$.

An ad hoc evaluation of the performance of our estimators can be
obtained by predicting game statistics after the first home and first
away games. Since some teams only play each other three times, we do
not have late season data on all possible combinations of team pairs
by home versus away games. For the late season data we do have, we
present the squared error losses for predicting the statistics of the
remaining part of the season for each conference in Table
\ref{tab:sel.nba}. The different estimators are (1) the raw data array
$\mathcal{X}$, (2) the mean estimates of the main-effects ANOVA model,
(3) the mode-specific soft-thresholding shrunken residual tensor added
to the mean estimates of the main-effects ANOVA model, (4) the
truncated HOSVD shrunken residual tensor added to the mean estimates
of the main-effects ANOVA model, and (5) an estimator derived from
logistic regression using the main-effects of each mode. The losses
are with respect to the arc-sin transformed data. The poor performance
of $\mathcal{X}$ is unsurprising. The amount of shrinkage that our
estimators produce indicates that the fully saturated model is
over-fitting and that most of the information is contained in the
main-effects. However, our mode-specific soft-thresholding estimator
is also fitting the fully saturated model and it performs comparable
to the main-effects ANOVA model, even improving the predictions for
the Eastern conference.

\begin{table}[ht]
  \centering
  \begin{tabular}{rrr}
    Estimator & East & West\\
    \hline
    $\mathcal{X}$ & 2410 & 2476\\
    ANOVA & 1344 & 1364 \\
    Mode-specific Soft-thresholding & 1327 & 1385 \\
    Truncated HOSVD & 1391 & 1451\\
    Logistic Regression & 1481 & 1552\\
  \end{tabular}
  \caption{Squared error losses when predicting the statistics of the remaining games of the season.}
  \label{tab:sel.nba}
\end{table}

\section{Discussion}
\label{sec:discussion}

This paper introduced new classes of shrinkage estimators for
tensor-valued data that are higher-order generalizations of existing
matrix spectral estimators. Each class is indexed by tuning parameters
whose values we chose by minimizing an unbiased estimate of the
risk. In terms of MSE, these estimators outperform their matrix
counterparts when the mean has approximately low multilinear rank and
they perform competitively when the mean does not have low multilinear
rank.

There has been some recent work on penalized optimization methods for
estimating signal tensors in the presence of Gaussian noise
\citep{signoretto2010convex,tomioka2011estimation,tomioka2011statistical,liu2013tensor,tomioka2013convex}. Usually,
these estimators are defined as the minimizers of a penalized squared
error empirical loss, where the penalty is usually some generalization
of the nuclear norm to tensors (for example, the sum of the nuclear
norms of the $K$ matricizations of a tensor). These estimators, though
similar in spirit, are very different from our approach. The main
advantage of our estimators is their simplicity --- they are simply
functions of the HOSVD (\ref{equation:ho.spect.est}) for which there
are efficient and accurate numerical procedures to compute.

We have presented a way to adaptively choose the tuning parameters of
our higher-order spectral estimators by minimizing the SURE. This
approach is applicable, not just for the truncated HOSVD
(\ref{equation:trunc.shrink}) and the mode-specific soft-thresholding
(\ref{equation:msst.est}) estimators, but also for \emph{all}
estimators of the form (\ref{equation:ho.spect.est}) that satisfy the
conditions of Theorem \ref{theorem:stein.theorem}.  Although we found
that adaptively choosing the tuning parameters by minimizing the SURE
worked well under the scenarios we studied, there are other ways to
select tuning parameters. In the case of matrix spectral estimators,
others have chosen the amount of shrinkage by minimax considerations
\citep{efron1972empirical,stein1981estimation}, cross-validation
\citep{bro2008cross,owen2009bi,josse2012selecting}, and asymptotic
considerations
\citep{gavish2014optimalhard,gavish2014optimal}. Exploring these
methods for our higher-order spectral estimators
(\ref{equation:ho.spect.est}) is a current research area of the
authors.

In this paper, we focused on estimators of the form
(\ref{equation:ho.spect.est}). If the mean tensor is believed to have
approximately low multilinear rank, we should shrink the core array
through the Tucker product along the modes to obtain this low
multilinear rank. The form of our higher-order spectral estimators
(\ref{equation:ho.spect.est}) allows us to use the mode-specific
singular values to determine the form and amount of shrinkage that
should be performed to each mode of the core array. However, different
classes of higher-order spectral estimators can be studied. In the
Appendix \ref{sec:sure.s.shrink}, we explore functions that shrink
each element of the core array individually:
\begin{align*}
t(\mathcal{X}) = (U_1,\ldots,U_K)\cdot g(\mathcal{S}), \text{ where } g(\mathcal{S})_{[\mathbf{i}]} = g_{\mathbf{i}}(\mathcal{S}_{[\mathbf{i}]}).
\end{align*}
This class of estimators can be used, for example, to induce zeros in
the core array, which has applications in increasing the
interpretability of a higher-order generalization of principal
components analysis
\citep{henrion1993body,kiers1997uniqueness,murakami1998case,andersson1999general,de2001independent,martin2008jacobi}.

Although the error variance $\tau^2$ in (\ref{equation:normal.model})
might be known in some settings, such as fMRI data sets
\citep{candes2013unbiased}, in most applied situations the variance
would not be unknown. There are matrix-specific estimates of the
variance that can be applied to tensor-variate datasets by first
matricizing along each mode. In our software, we have implemented the
methods described in \citet{choi2014selecting} and
\citet{gavish2014optimalhard}. Though, instead of plugging in an
estimate of the variance into the SURE formula
(\ref{equation:sure.form}), there has been a recent suggestion to use
a generalized SURE formula
\citep{sylvain2012smooth,josse2015adaptive}:
\begin{align*}
\gsure(t) = \frac{||t(\mathcal{X}) - \mathcal{X}||^2}{(1 - \diverge(t(\mathcal{X}))/p)^2}.
\end{align*}
This formula is motivated by generalized cross-validation
\citep{golub1979generalized} and is an approximation to SURE
\citep{josse2015adaptive}. Importantly, GSURE does not require the
variance to be known, and so its minimization may be accomplished
without an estimate of $\tau^2$. For our higher-order spectral
estimators, we have already accomplished the hard work of calculating
the divergence in this paper, and implementing GSURE is an easy
application of this result. Our software allows for GSURE
implementation for the estimators discussed in this article.

All methods discussed in this paper are implemented in the R package
\texttt{hose} available at
\begin{center}
  \href{https://github.com/dcgerard/hose}{https://github.com/dcgerard/hose}.
\end{center}
Code and instructions to reproduce all of the results of this paper are available at
\begin{center}
\href{https://github.com/dcgerard/hose\_paper/tree/master/reproduce\_sure}{https://github.com/dcgerard/hose\_paper/tree/master/reproduce\_sure}.
\end{center}
\appendix
\section{Simplification of the divergence}
\label{sec:simp.div}
We will need the $(i_1,\ldots,i_K)$th element of $U^T \cdot
df[\Delta^{\bf i}]$ in (\ref{equation:u.times.df}). There are three
terms in (\ref{equation:u.times.df}). We will deal with them one by
one. First, we will work with the first term of
(\ref{equation:u.times.df}), $\sum_{k=1}^Kd\tilde{U}_k[\Delta^{\bf i}]
\cdot f(D) \cdot \mathcal{V}$.  Note that, for $\mathcal{A} =
f(D)\cdot\mathcal{V}$, we have
\begin{align*}
  &\left(d\tilde{U}_k[\Delta^{\bf i}] \cdot \mathcal{A}\right)_{[\mathbf{i}]} = \left((I_{p_1},\ldots,I_{p_{k-1}},\Omega_{U_k}[\Delta^{\bf i}],I_{p_{k+1}},\ldots,I_{p_K})\cdot \mathcal{A}\right)_{[\mathbf{i}]}\\
  &= -\sum_{j=1, j\neq i_k}^{p_k}\mathcal{S}_{[i_1,\ldots,i_{k-1},j,i_{k+1},\ldots,i_K]}\mathcal{A}_{[i_1,\ldots,i_{k-1},j,i_{k+1},\ldots,i_K]}/[(\sigma_{i_k}^k)^2 - (\sigma_j^k)^2]\\
  &= -\sum_{j=1, j\neq i_k}^{p_k}\left(\prod_{\ell=1,\ell \neq k}^Kf_{i_{\ell}}^{\ell}(\sigma_{i_{\ell}}^{\ell})\right)f_j^k(\sigma_j^k)\mathcal{S}_{[i_1,\ldots,i_{k-1},j,i_{k+1},\ldots,i_K]}\mathcal{V}_{[i_1,\ldots,i_{k-1},j,i_{k+1},\ldots,i_K]}/[(\sigma_{i_k}^k)^2 - (\sigma_j^k)^2]\\
  &= -\left(\prod_{\ell=1,\ell \neq k}^Kf_{i_{\ell}}^{\ell}(\sigma_{i_{\ell}}^{\ell})\right)\sum_{j=1, j\neq i_k}^{p_k}f_j^k(\sigma_j^k)\mathcal{S}_{[i_1,\ldots,i_{k-1},j,i_{k+1},\ldots,i_K]}\mathcal{V}_{[i_1,\ldots,i_{k-1},j,i_{k+1},\ldots,i_K]}/[(\sigma_{i_k}^k)^2 - (\sigma_j^k)^2].
\end{align*}

Now we work with the second term of (\ref{equation:u.times.df}),
$\sum_{k=1}^K df(\tilde{D})_k[\Delta^{\bf i}] \cdot \mathcal{V}$.  We
have that:
\begin{align}
  \left(df(\tilde{D})_k[\Delta^{\bf i}] \cdot
    \mathcal{V}\right)_{[\mathbf{i}]} &= \left(\prod_{j\neq
      k}f_{i_j}^j(\sigma_{i_j}^j)\right)d(f_{i_k}^k \circ
  \sigma_{i_k}^k)[\Delta^{\bf i}]\mathcal{V}_{[\mathbf{i}]} \label{equation:simp.diff.first}\\
  &= \left(\prod_{j\neq k}f_{i_j}^j(\sigma_{i_j}^j)\right)\left(\frac{d}{d\sigma_{i_k}^k}f_{i_k}^k(\sigma_{i_k}^k)\right)\mathcal{V}_{[\mathbf{i}]}\mathcal{S}_{[\mathbf{i}]}/\sigma_{i_k}^k\nonumber\\
  &= \left(\prod_{j\neq k}f_{i_j}^j(\sigma_{i_j}^j)/\sigma_{i_j}^j\right)\left(\frac{d}{d\sigma_{i_k}^k}f_{i_k}^k(\sigma_{i_k}^k)\right)\mathcal{S}_{[\mathbf{i}]}^2/(\sigma_{i_k}^k)^2, \label{equation:final.form.second.term}
\end{align}
since $\mathcal{V}_{[\mathbf{i}]} =
\left(\prod_{k=1}^K\sigma_{i_k}^k\right)^{-1}\mathcal{S}_{[\mathbf{i}]}$.

It remains to work with the third term in (\ref{equation:u.times.df}),
$f(D) \cdot d\mathcal{V}[\Delta^{\bf i}]$.  We have:
\begin{align}
  \left(f(D) \cdot d\mathcal{V}[\Delta^{\bf i}]\right)_{[\mathbf{i}]} = \left(\prod_{k=1}^Kf_{i_k}^k(\sigma_{i_k}^k)\right) d\mathcal{V}[\Delta^{\bf i}]_{[\mathbf{i}]}. \label{eqaution:element.of.f.dv}
\end{align}
We now need to obtain $d\mathcal{V}[\Delta^{\bf i}]_{[\mathbf{i}]}$. From (\ref{equation:d.V}), we have
\begin{align}
  d\mathcal{V}[\Delta^{\bf i}] &= D^{-1} \cdot U^T \cdot \Delta^{\bf i} - \sum_{k=1}^K dF_k[\Delta^{\bf i}]\cdot \mathcal{V} - \sum_{k=1}^K dG_k[\Delta^{\bf i}] \cdot \mathcal{V},\nonumber\\
  &= D^{-1} \cdot E^{\bf i} - \sum_{k=1}^K dF_k[\Delta^{\bf i}]\cdot \mathcal{V} - \sum_{k=1}^K dG_k[\Delta^{\bf i}] \cdot \mathcal{V}.\label{equation:expand.d.v}
\end{align}
There are three terms in (\ref{equation:expand.d.v}). Let us deal with them one by one. The first term in (\ref{equation:expand.d.v}) is
\begin{align}
  \left(D^{-1} \cdot E^{\bf i}\right)_{[\mathbf{i}]} = \left(\prod_{k=1}^K\sigma_{i_k}^k\right)^{-1}.\label{equation:dv.1}
\end{align}
The second term in (\ref{equation:expand.d.v}) is
\begin{align}
  &\left( dF_k[\Delta^{\bf i}]\cdot \mathcal{V}\right)_{[\mathbf{i}]} \nonumber\\
  &= \left((I_{p_1},\ldots,I_{p_{k-1}},D_k^{-1}\Omega_{U_k}[\Delta^{\bf i}]D_k,I_{p_{k+1}},\ldots,I_{p_K})\cdot \mathcal{V}\right)_{[\mathbf{i}]}\nonumber\\
  &= \sum_{j=1}^{p_k}\left(D_k^{-1}\Omega_{U_k}[\Delta^{\bf i}]D_k\right)_{[i_k,j]} \mathcal{V}_{[i_1,\ldots,i_{k-1},j,i_{k+1},\ldots,i_K]}\nonumber\\
  &= -\sum_{j=1, j\neq i_k}^{p_k}\frac{\sigma_j^k}{\sigma_{i_k}^k}S_{[i_1,\ldots,i_{k-1},j,i_{k+1},\ldots,i_K]}\mathcal{V}_{[i_1,\ldots,i_{k-1},j,i_{k+1},\ldots,i_K]}/[(\sigma_{i_k}^k)^2 - (\sigma_j^k)^2]\nonumber\\
  &= -\sum_{j=1, j\neq i_k}^{p_k}\frac{\sigma_j^k}{\sigma_{i_k}^k}S_{[i_1,\ldots,i_{k-1},j,i_{k+1},\ldots,i_K]}\mathcal{V}_{[i_1,\ldots,i_{k-1},j,i_{k+1},\ldots,i_K]}/[(\sigma_{i_k}^k)^2 - (\sigma_j^k)^2]. \label{equation:dv.2}
\end{align}
The third term in (\ref{equation:expand.d.v}) is
\begin{align}
  \left(dG_k[\Delta^{\bf i}] \cdot \mathcal{V}\right)_{[\mathbf{i}]} &= \left(\mathcal{V} \times_k D_k^{-1}dD_k[\Delta^{\bf i}]\right)_{[\mathbf{i}]}\nonumber\\
  &= d\sigma_{i_k}^k[\Delta] \mathcal{V}_{[\mathbf{i}]}/\sigma_{i_k}^k\nonumber\\
  &= \mathcal{S}_{[\mathbf{i}]}\mathcal{V}_{[\mathbf{i}]}/(\sigma_{i_k}^k)^2.\label{equation:dv.3}
\end{align}
To obtain the third term in (\ref{equation:u.times.df}), we need only
plug in (\ref{equation:dv.1}), (\ref{equation:dv.2}), and
(\ref{equation:dv.3}) into (\ref{equation:expand.d.v}). And then we
need to plug in (\ref{equation:expand.d.v}) into
(\ref{eqaution:element.of.f.dv}).

We will now show that the divergence is of the form:
\begin{align}
  &\sum_{i_1,\ldots,i_K} \left[\mathcal{C}_{[\mathbf{i}]}\prod_{k=1}^Kf_{i_k}^k(\sigma_{i_k}^k)/\sigma_{i_k}^k + \sum_{k=1}^K \left(\prod_{j\neq k}f_{i_j}^j(\sigma_{i_j}^j)/\sigma_{i_j}^j\right)\left(\frac{d}{d\sigma_{i_k}^k}f_{i_k}^k(\sigma_{i_k}^k)\right)\mathcal{S}_{[i_1,\ldots,i_k]}^2/(\sigma_{i_k}^k)^2\right]\nonumber\\
  &= \sumo\left(f(D) \cdot D^{-1}\cdot \mathcal{C} + \sum_{k=1}^KH_k \cdot \mathcal{S}^2\right), \nonumber
\end{align}
for $H_k$ in (\ref{equation:H.k}) and $\mathcal{C} \in
\mathbb{R}^{p_1\times\cdots\times p_K}$ in
(\ref{equation:C.array}). The term $f(D) \cdot D^{-1}\cdot
\mathcal{C}$ is from the first and second parts of
(\ref{equation:u.times.df}), whereas the terms $\sum_{k=1}^KH_k \cdot
\mathcal{S}^2$ are from the second part of (\ref{equation:u.times.df})
and were already derived in
(\ref{equation:final.form.second.term}). Let us find $\mathcal{C}$.
Let $\mathbf{f}_{i_1,\ldots,i_k} = \mathbf{f}_{\mathbf{i}} =
\prod_{k=1}^Kf_{i_k}^k(\sigma_{i_k}^k)$. Ignoring the second term in
(\ref{equation:u.times.df}), we have that the sum of the first and
third terms in (\ref{equation:u.times.df}) is equal to:
\begin{align*}
  \begin{split}
    &\sum_{\bf i} \left\{ -\sum_{k=1}^K  \sum_{m=1,m\neq i_k}^{p_k} \mathbf{f}_{i_1,\ldots,i_{k-1},m,i_{k+1},\ldots,i_K}\frac{\mathcal{S}_{[i_1,\ldots,i_{k-1},m,i_{k+1},\ldots,i_K]}\mathcal{V}_{[i_1,\ldots,i_{k-1},m,i_{k+1},\ldots,i_K]}}{(\sigma_{i_k}^k)^2 - (\sigma_{m}^k)^2} \right.\\
    &\left. +\ \mathbf{f}_{\bf i}\left[ \left(\prod_{k=1}^K \sigma_{i_k}^k\right)^{-1} + \sum_{k=1}^K \sum_{j = 1, j\neq i_k}^{p_k} \frac{\sigma_j^k}{\sigma_{i_k}^k} \frac{\mathcal{S}_{[i_1,\ldots,i_{k-1},j,i_{k+1},\ldots,i_K]}\mathcal{V}_{[i_1,\ldots,i_{k-1},j,i_{k+1},\ldots,i_K]}}{(\sigma_{i_k}^k)^2 - (\sigma_{j}^k)^2}\right.\right.\\
    &\left.\left.- \mathcal{S}_{\bf [i]} \mathcal{V}_{\bf [i]}\sum_{k=1}^K \frac{1}{(\sigma_{i_k}^k)^2}\right]\right\}.
  \end{split}
\end{align*}
After rearranging summands, we obtain:
\begin{align*}
  \begin{split}
    &\sum_{\bf i} \mathbf{f}_{\bf i}\left[ \left(\prod_{k=1}^K \sigma_{i_k}^k\right)^{-1} + \sum_{k=1}^K \sum_{j = 1, j\neq i_k}^{p_k} \frac{\sigma_j^k}{\sigma_{i_k}^k} \frac{\mathcal{S}_{[i_1,\ldots,i_{k-1},j,i_{k+1},\ldots,i_K]}\mathcal{V}_{[i_1,\ldots,i_{k-1},j,i_{k+1},\ldots,i_K]}}{(\sigma_{i_k}^k)^2 - (\sigma_{j}^k)^2} \right.\\
    &\left. - \mathcal{S}_{\bf [i]} \mathcal{V}_{\bf [i]}\sum_{k=1}^K\left( \frac{1}{(\sigma_{i_k}^k)^2} + \sum_{m=1,m\neq i_k}^{p_k}\frac{1}{(\sigma_m^k)^2 - (\sigma_{i_k}^k)^2}  \right)\right].
  \end{split}
\end{align*}
And after factoring out $\prod_{k=1}^K(\sigma_{i_k}^k)^{-1}$, we get:
\begin{align*}
  \begin{split}
    &\sum_{\bf i} \mathbf{f}_{\bf i}\left(\prod_{k=1}^K \sigma_{i_k}^k\right)^{-1}\left[ 1 + \sum_{k=1}^K \sum_{j = 1, j\neq i_k}^{p_k} \frac{\mathcal{S}_{[i_1,\ldots,i_{k-1},j,i_{k+1},\ldots,i_K]}^2}{(\sigma_{i_k}^k)^2 - (\sigma_{j}^k)^2} \right.\\
    &\left.-\mathcal{S}_{\bf [i]}^2\sum_{k=1}^K\left( \frac{1}{(\sigma_{i_k}^k)^2} + \sum_{m=1,m\neq i_k}^{p_k}\frac{1}{(\sigma_m^k)^2 - (\sigma_{i_k}^k)^2}  \right)\right].
  \end{split}
\end{align*}
That is,
\begin{align}
  \label{equation:better.c}
  \mathcal{C}_{[{\bf i}]} =  1 + \sum_{k=1}^K \sum_{j = 1, j\neq i_k}^{p_k} \frac{\mathcal{S}_{[i_1,\ldots,i_{k-1},j,i_{k+1},\ldots,i_K]}^2}{(\sigma_{i_k}^k)^2 - (\sigma_{j}^k)^2} - \mathcal{S}_{\bf [i]}^2\sum_{k=1}^K\left( \frac{1}{(\sigma_{i_k}^k)^2} + \sum_{m=1,m\neq i_k}^{p_k}\frac{1}{(\sigma_m^k)^2 - (\sigma_{i_k}^k)^2}  \right).
\end{align}

\section{Details of optimization}
We now provide some brief details on our optimization strategy when considering only the mode-specific soft-thresholding estimator. Let
$f_{\mathbf{i}} = \prod_{k=1}^Kf_{i_k}^k(\sigma_{i_k}^k)$ and
$\tilde{\sigma}_{\mathbf{i}} = \prod_{k=1}^K\sigma_{i_k}^k$. The SURE
is equal to:
\begin{align}
  &||f(D) \cdot D ^{-1} \cdot \mathcal{S} - \mathcal{S}||^2 + 2\tau^2 \sum_{\mathbf{i}} \left[\left(f(D) \cdot D^{-1} \cdot \mathcal{C}\right)_{[\mathbf{i}]} + \sum_{k=1}^K \left(H_k \cdot \mathcal{S}^2\right)_{[\mathbf{i}]}\right] - p \tau^2\\
  & = \sum_{\mathbf{i}}\left[\left(f_{\mathbf{i}}\tilde{\sigma}^{-1}_{\mathbf{i}}\mathcal{S}_{[\mathbf{i}]} - \mathcal{S}_{[\mathbf{i}]}\right)^2 + 2\tau^2f_{\mathbf{i}}\tilde{\sigma}^{-1}_{\mathbf{i}}\mathcal{C}_{[\mathbf{i}]} + 2\tau^2f_{\mathbf{i}}\tilde{\sigma}^{-1}_{\mathbf{i}}\mathcal{S}_{[\mathbf{i}]}^2 \sum_{k=1}^K\frac{\frac{d}{d\sigma_{i_k}^k}f_{i_k}^k(\sigma_{i_k}^k)}{\sigma_{i_k}^kf_{i_k}^k(\sigma_{i_k}^k)}\right] - p\tau^2. \label{equation:sum.sure}
\end{align}
To update each $\lambda_k$, we simply apply a general purpose
univariate optimizer (e.g. Brent's method
\citep{brent1971algorithm}). To update $c$, we have
\begin{align*}
  &\frac{d}{dc}\left[c^2f_{\mathbf{i}}^2\tilde{\sigma}^{-2}_{\mathbf{i}}\mathcal{S}_{[\mathbf{i}]}^2 - 2 cf_{\mathbf{i}}\tilde{\sigma}^{-1}_{\mathbf{i}}\mathcal{S}_{[\mathbf{i}]}^2 + 2\tau^2cf_{\mathbf{i}}\tilde{\sigma}^{-1}_{\mathbf{i}}\mathcal{C}_{[\mathbf{i}]} + 2\tau^2cf_{\mathbf{i}}\tilde{\sigma}^{-1}_{\mathbf{i}}\mathcal{S}_{[\mathbf{i}]}^2 \sum_{k=1}^K\frac{1}{\sigma_{i_k}^kf_{i_k}^k(\sigma_{i_k}^k)}\right]\\
  &=2cf_{\mathbf{i}}^2\tilde{\sigma}^{-2}_{\mathbf{i}}\mathcal{S}_{[\mathbf{i}]}^2 - 2f_{\mathbf{i}}\tilde{\sigma}^{-1}_{\mathbf{i}}\mathcal{S}_{[\mathbf{i}]}^2 + 2\tau^2f_{\mathbf{i}}\tilde{\sigma}^{-1}_{\mathbf{i}}\mathcal{C}_{[\mathbf{i}]} + 2\tau^2f_{\mathbf{i}}\tilde{\sigma}^{-1}_{\mathbf{i}}\mathcal{S}_{[\mathbf{i}]}^2 \sum_{k=1}^K\frac{1}{\sigma_{i_k}^kf_{i_k}^k(\sigma_{i_k}^k)}.
\end{align*}
Let
\begin{align*}
  a &= \sum_{\mathbf{i}}f_{\mathbf{i}}^2\tilde{\sigma}^{-2}_{\mathbf{i}}\mathcal{S}_{[\mathbf{i}]}^2,\\
  b &= \sum_{\mathbf{i}}f_{\mathbf{i}}\tilde{\sigma}^{-1}_{\mathbf{i}}\mathcal{S}_{[\mathbf{i}]}^2,\\
  d &= \sum_{\mathbf{i}}\tau^2f_{\mathbf{i}}\tilde{\sigma}^{-1}_{\mathbf{i}}\mathcal{C}_{[\mathbf{i}]}, \text{ and}\\
  e &= \sum_{\mathbf{i}}\tau^2f_{\mathbf{i}}\tilde{\sigma}^{-1}_{\mathbf{i}}\mathcal{S}_{[\mathbf{i}]}^2 \sum_{k=1}^K\frac{1}{\sigma_{i_k}^kf_{i_k}^k(\sigma_{i_k}^k)},
\end{align*}
where we are summing over the set of $i_k$'s such that $\sigma_{i_k}^k
> \lambda_k$ for $k = 1,\ldots,K$. Then the minimum $c$ occurs at $(b
- d - e)/a$. This is a global minimizer, conditional on the
$\lambda_k$'s, since $a > 0$.

\section{General spectral functions}
\label{sec:gen.spec.func}
In Section \ref{subsection:diff}, we assumed that the spectral
functions were of the form:
\begin{align*}
  f^k(D_k) = \diag(f_1^k(\sigma_1^k),\ldots,f_{p_k}^k(\sigma_{p_k}^k)).
\end{align*}
That is, we only used $\sigma_i^k$ when determining the amount of
shrinkage to perform on $\sigma_i^k$. In this section, we will extend
these results to weakly differentiable functions of the form:
\begin{align*}
  f^k: \mathcal{D}_{p_k}^+ \rightarrow \mathcal{D}_{p_k}^+,
\end{align*}
where $\mathcal{D}_{p_k}^+$ is the space of $p_k$ by $p_k$ diagonal
matrices with non-negative diagonal elements. This will allow us to
use $\sigma_1^k,\ldots,\sigma_{p_k}^k$ to determine the amount of
shrinkage to perform on $\sigma_i^k$. These types of spectral
functions might be desirable if, for example, we wished to develop a
generalization of estimator (\ref{equation:improved.em}). Let
$\mathbf{s}_k = (\sigma_1^k,\ldots,\sigma_{p_k}^k)^T$ be the vector of
the $k$th mode specific singular values. We look at functions
\begin{align*}
  g^k: \mathbb{R}^{p_k+} \rightarrow \mathbb{R}^{p_k+},
\end{align*}
where $\mathbb{R}^{p_k+}$ is the space of $p_k$ vectors with
non-negative elements. Then
\begin{align*}
  f^k(D_k) = \diag(g^k(\mathbf{s}_k))
\end{align*}

The derivation of the SURE is the same as in Section
\ref{subsection:diff} except for the second term in
(\ref{equation:u.times.df}):
\begin{align*}
  \sum_{k=1}^K df(\tilde{D})_k[\Delta^{\bf i}] \cdot \mathcal{V}.
\end{align*}
We have:
\begin{align}
  \left(df(\tilde{D})_k[\Delta^{\bf i}] \cdot
    \mathcal{V}\right)_{[\mathbf{i}]} &= \left(\prod_{j\neq
      k}f_{i_j}^j(\sigma_{i_j}^j)\right)d(f^k \circ D_k)[\Delta^{\bf
    i}]_{[i_k,i_k]}\mathcal{V}_{[\mathbf{i}]}\nonumber\\
  &= \left(\prod_{j\neq k}f_{i_j}^j(\sigma_{i_j}^j)\right)d(g^k \circ \mathbf{s}_k)[\Delta^{\bf i}]_{[i_k]}\mathcal{V}_{[\mathbf{i}]} \label{equation:g.circ.s}
\end{align}

By the chain rule:
\begin{align*}
  d(g^k \circ \mathbf{s}_k)[\Delta^{\bf i}] = J_{g^k}(\mathbf{s}_k) d\mathbf{s}_k[\Delta],
\end{align*}
where $J_{g^k}(\mathbf{s}_k)$ is the Jacobian matrix of $g_k$
evaluated at $\mathbf{s}_k$. We know from (\ref{equation:d.comp}) that
\begin{align*}
  d\mathbf{s}_{k}[\Delta^{\bf i}]_{[j]} = 1(j = i_k)S_{[{\bf i}]}
  / \sigma_j^k \text{ for } j = 1,\ldots,p_k.
\end{align*}
So $d\mathbf{s}_k[\Delta^{\bf i}]$ contains zeros except in the $i_k$th position. Hence
\begin{align*}
  (J_{g^k}(\mathbf{s}_k) d\mathbf{s}_k[\Delta])_{[j]} = J_{g^k}(\mathbf{s}_k)_{[j,i_k]}S_{[{\bf i}]}
  / \sigma_{i_k}^k \text{ for } j = 1,\ldots,p_k
\end{align*}
And so
\begin{align}
  d(g^k \circ \mathbf{s}_k)[\Delta^{\bf i}]_{[i_k]} &=
  (J_{g^k}(\mathbf{s}_k) d\mathbf{s}_k[\Delta])_{[i_k]}\nonumber\\
  &= J_{g^k}(\mathbf{s}_k)_{[i_k,i_k]}S_{[{\bf i}]}/\sigma_{i_k}^k. \label{equation:jacob.single.element}
\end{align}
Inserting (\ref{equation:jacob.single.element}) into
(\ref{equation:g.circ.s}), we get:
\begin{align*}
  \left(df(\tilde{D})_k[\Delta^{\bf i}] \cdot
    \mathcal{V}\right)_{[\mathbf{i}]} &= \left(\prod_{j\neq
      k}f_{i_j}^j(\sigma_{i_j}^j)\right)
  J_{g^k}(\mathbf{s}_k)_{[i_k,i_k]}S_{[{\bf
      i}]}/\sigma_{i_k}^k\mathcal{V}_{[{\bf i}]}.
\end{align*}
That is, we only need the $(i_k,i_k)$th element of the Jacobian matrix
of the spectral function. Let
\begin{align*}
  J^k(D_k) =
  \diag(J_{g^k}(\mathbf{s}_k)_{[1,1]},\ldots,J_{g^k}(\mathbf{s}_k)_{[p_k,p_k]})
  \text{ for } k=1,\ldots,K.
\end{align*}
Then
\begin{align*}
  \sum_{k=1}^K df(\tilde{D})_k[\Delta^{\bf i}] \cdot \mathcal{V} = \sum_{k=1}^K Q_k \cdot \mathcal{S}^2
\end{align*}
where
\begin{align*}
  Q_k = (f^1(D_1)D_1^{-1},\ldots,f^{k-1}(D_{k-1})D_{k-1}^{-1},J_k(D_k)D_{k}^{-2},f^{k+1}(D_{k+1})D_{k+1}^{-1},\ldots,f^K(D_K)D_{K}^{-1}).
\end{align*}
The divergence is now of the form:
\begin{align*}
  \sumo\left(f(D) \cdot D^{-1}\cdot \mathcal{C} + \sum_{k=1}^KQ_k \cdot \mathcal{S}^2\right).
\end{align*}

\section{SURE for estimators that shrink elements in $\mathcal{S}$}
\label{sec:sure.s.shrink}
Consider the HOSVD (\ref{equation:hosvd}). In this section, we will
find the SURE for estimators of the form:
\begin{align}
  \label{equation:hose.s}
  t(\mathcal{X}) = U \cdot g(\mathcal{S}),
\end{align}
where
\begin{align*}
  (g(\mathcal{S}))_{[\mathbf{i}]} = g_{\mathbf{i}}(\mathcal{S}_{[\mathbf{i}]}).
\end{align*}
That is, we shrink each element of $\mathcal{S}$ separately. An
example of such a function is to soft-threshold each element of
$\mathcal{S}$:
\begin{align*}
  g_{\mathbf{i}}(\mathcal{S}_{[\mathbf{i}]}) = \sign(\mathcal{S}_{[\mathbf{i}]})(|\mathcal{S}_{[\mathbf{i}]}| - \lambda)_+,
\end{align*}
where $\sign(x)$ is $-1$ of $x < 0$, $1$ if $x > 0$, and $0$ if $x =
0$. Such a function induces $0$'s in the core array, which has
applications to increasing interpretability of higher-order PCA
\citep{henrion1993body,kiers1997uniqueness,murakami1998case,andersson1999general,de2001independent,martin2008jacobi}. Inducing
$0$'s in the core array is usually performed by applying orthogonal
rotations along each mode. Our approach provides an alternative
mechanism to induce $0$'s in the core array.

\begin{theorem}
  The differentials of $U_k$ and $\mathcal{S}$ are given in equations
  (\ref{equation:d.U}) and (\ref{equation:d.S}), respectively.
\end{theorem}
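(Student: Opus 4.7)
The plan is to reuse the machinery developed in the proof of Theorem \ref{theorem:diff}. Since the left singular vectors $U_k$ in the standard HOSVD $\mathcal{X}=(U_1,\ldots,U_K)\cdot\mathcal{S}$ are exactly those from the SVD of $\mathcal{X}_{(k)}=U_kD_kV_k^T$, the differential $dU_k[\Delta]$ has already been derived as equation (\ref{equation:d.U}), with $\Omega_{U_k}[\Delta]$ given by (\ref{equation:d.omega}). So nothing new is needed for the $U_k$ part — we simply cite that result.

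For $d\mathcal{S}[\Delta]$, I would mimic the derivation of $d\mathcal{V}[\Delta]$ in the proof of Theorem \ref{theorem:diff}, but in the simpler setting where no middle $(D_1,\ldots,D_K)$ layer is present. Let $U=(U_1,\ldots,U_K)$. Applying the chain rule (and using the matrix-product/Kronecker differentials of \cite{magnus1999matrix}) to $\mathcal{X}=U\cdot\mathcal{S}$ gives
\begin{align*}
\Delta \;=\; d\mathcal{X}[\Delta] \;=\; \sum_{k=1}^K d\underline{U}_k[\Delta]\cdot\mathcal{S} \;+\; U\cdot d\mathcal{S}[\Delta],
\end{align*}
with $d\underline{U}_k[\Delta]$ as in (\ref{equation:U.underline}). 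Solving for $d\mathcal{S}[\Delta]$ and using the orthogonality $U_j^TU_j=I_{p_j}$ together with $U_k^T dU_k[\Delta]=\Omega_{U_k}[\Delta]$, one obtains
\begin{align}
\label{equation:d.S}
d\mathcal{S}[\Delta] \;=\; U^T\cdot\Delta \;-\; \sum_{k=1}^K dE_k[\Delta]\cdot\mathcal{S},
\end{align}
where
\begin{align*}
dE_k[\Delta] \;=\; (I_{p_1},\ldots,I_{p_{k-1}},\,\Omega_{U_k}[\Delta],\,I_{p_{k+1}},\ldots,I_{p_K}).
\end{align*}
This is the analogue of (\ref{equation:d.V}) but without the $dF_k$-free additive $dG_k$ term, since $D$ no longer appears in the decomposition.

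I do not anticipate any real obstacles here: the full-multilinear-rank assumption of Section \ref{subsection:diff} carries over, guaranteeing almost-sure smoothness of the HOSVD factors, and every algebraic step is a specialization of calculations already carried out in the proof of Theorem \ref{theorem:diff}. The only small care-point is bookkeeping the Tucker-product version of the product rule correctly, so that the sum over $k$ in the chain rule matches the list-slot structure used in (\ref{equation:U.underline}); once that is in place, solving for $d\mathcal{S}[\Delta]$ is immediate by left-multiplying each slot by $U_k^T$.
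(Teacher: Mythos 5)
Your proposal is correct and follows essentially the same route as the paper: cite the already-derived $dU_k[\Delta]$ from Theorem \ref{theorem:diff}, apply the chain rule to $\mathcal{X}=U\cdot\mathcal{S}$, and left-multiply by $U^T$ to solve for $d\mathcal{S}[\Delta]$, yielding exactly equation (\ref{equation:d.S}) (your $dE_k[\Delta]$ is the paper's $d\tilde{U}_k[\Delta]$ from (\ref{equation:U.tilde})). Your observation that the $dG_k$ term disappears because no $(D_1,\ldots,D_K)$ layer is present is an accurate account of how the computation simplifies relative to (\ref{equation:d.V}).
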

\begin{proof}
  We have already calculated $dU_k[\Delta]$ in Theorem
  \ref{theorem:diff}. To obtain $d\mathcal{S}[\Delta]$, we apply the
  chain rule to the HOSVD (\ref{equation:hosvd}) and solve for
  $d\mathcal{S}[\Delta]$.
\begin{align*}
  \Delta = d\mathcal{X}[\Delta] &= d(U\cdot \mathcal{S})[\Delta] =\sum_{k=1}^Kd\underline{U}_k[\Delta] \cdot \mathcal{S} + U \cdot d\mathcal{S}[\Delta],
\end{align*}
where $d\underline{U}_k[\Delta]$ is defined in (\ref{equation:U.underline}). Hence,
\begin{align}
  \label{equation:d.S}
  d\mathcal{S}[\Delta] =  U^T \cdot \Delta - \sum_{k=1}^K d\tilde{U}_k[\Delta]\cdot \mathcal{S}
\end{align}
where $d\tilde{U}_k[\Delta]$ is defined in (\ref{equation:U.tilde}).
\end{proof}

The derivation of the divergence for functions of the form
(\ref{equation:hose.s}) is very similar to that in Section
\ref{subsection:div}. The divergence may still be found from
(\ref{equation:divergence.index}). From the chain rule, we have:
\begin{align*}
  dt[\Delta^{\bf i}] = \sum_{k=1}^Kd\underline{U}_k[\Delta^{\bf i}] \cdot g(\mathcal{S}) + U \cdot d(g \circ \mathcal{S})[\Delta^{\bf i}],
\end{align*}
where this ``$\circ$'' means composition and
$d\underline{U}_k[\Delta^{\bf i}]$ is from
(\ref{equation:U.underline}). Hence,
\begin{align}
  \label{equation:u.times.df.S}
  U^T \cdot dt[\Delta^{\bf i}] = \sum_{k=1}^Kd\tilde{U}_k[\Delta^{\bf i}] \cdot g(\mathcal{S}) + d(g \circ \mathcal{S})[\Delta^{\bf i}],
\end{align}
where $d\tilde{U}_k[\Delta^{\bf i}]$ is from (\ref{equation:U.tilde}),
noting that the relationship in (\ref{equation:d.omega.i}) still
holds.

From the chain rule we have:
\begin{align*}
  d(f_{[\mathbf{i}]} \circ \mathcal{S}_{[\mathbf{i}]})[\Delta^{\bf i}]_{[\mathbf{i}]} = \left(\frac{d}{d\mathcal{S}_{[\mathbf{i}]}}f_{\mathbf{i}}(\mathcal{S}_{[\mathbf{i}]})\right) d\mathcal{S}_{[\mathbf{i}]}[\Delta^{\bf i}].
\end{align*}

We need the $(i_1,\ldots,i_K)$th element of
\begin{align}
  &\left(U^T \cdot df[\Delta^{\bf i}]\right)_{[\mathbf{i}]} \nonumber\\
  &= \left(\sum_{k=1}^Kd\tilde{U}_k[\Delta^{\bf i}] \cdot f(\mathcal{S}) + d(f \circ \mathcal{S})[\Delta^{\bf i}]\right)_{[\mathbf{i}]}\nonumber\\
  &= \sum_{k=1}^K\left(d\tilde{U}_k[\Delta^{\bf i}] \cdot f(\mathcal{S})\right)_{[\mathbf{i}]} + \left(\frac{d}{d\mathcal{S}_{[\mathbf{i}]}}f_{\mathbf{i}}(\mathcal{S}_{[\mathbf{i}]})\right) d\mathcal{S}_{[\mathbf{i}]}[\Delta^{\bf i}]\nonumber\\
  &= \sum_{k=1}^K\left(d\tilde{U}_k[\Delta^{\bf i}] \cdot f(\mathcal{S})\right)_{[\mathbf{i}]} + \left(\frac{d}{d\mathcal{S}_{[\mathbf{i}]}}f_{\mathbf{i}}(\mathcal{S}_{[\mathbf{i}]})\right) d\mathcal{S}[\Delta^{\bf i}]_{[\mathbf{i}]}\nonumber\\
  &= \sum_{k=1}^K\left(d\tilde{U}_k[\Delta^{\bf i}] \cdot f(\mathcal{S})\right)_{[\mathbf{i}]} + \left(\frac{d}{d\mathcal{S}_{[\mathbf{i}]}}f_{\mathbf{i}}(\mathcal{S}_{[\mathbf{i}]})\right) \left(\left(U^T \cdot \Delta^{\bf i}\right)_{[\mathbf{i}]} - \sum_{k=1}^K \left(d\tilde{U}_k[\Delta^{\bf i}]\cdot \mathcal{S}\right)_{[\mathbf{i}]} \right) \nonumber\\
  &= \sum_{k=1}^K\left(d\tilde{U}_k[\Delta^{\bf i}] \cdot f(\mathcal{S})\right)_{[\mathbf{i}]} + \left(\frac{d}{d\mathcal{S}_{[\mathbf{i}]}}f_{\mathbf{i}}(\mathcal{S}_{[\mathbf{i}]})\right) \left(E^{\bf i}_{[\mathbf{i}]} - \sum_{k=1}^K \left(d\tilde{U}_k[\Delta^{\bf i}]\cdot \mathcal{S}\right)_{[\mathbf{i}]} \right) \nonumber\\
  &= \sum_{k=1}^K\left(d\tilde{U}_k[\Delta^{\bf i}] \cdot f(\mathcal{S})\right)_{[\mathbf{i}]} + \left(\frac{d}{d\mathcal{S}_{[\mathbf{i}]}}f_{\mathbf{i}}(\mathcal{S}_{[\mathbf{i}]})\right) \left(1 - \sum_{k=1}^K \left(d\tilde{U}_k[\Delta^{\bf i}]\cdot \mathcal{S}\right)_{[\mathbf{i}]} \right).  \label{equation:final.of.right.simp}
\end{align}

Note that for any $\mathcal{A} \in \mathbb{R}^{p_1\times\cdots \times
  p_K}$
\begin{align*}
  &\left(d\tilde{U}_k[\Delta^{\bf i}] \cdot \mathcal{A}\right)_{[\mathbf{i}]} = \left((I_{p_1},\ldots,I_{p_{k-1}},d\Omega_{U_k}[\Delta^{\bf i}],I_{p_{k+1}},\ldots,I_{p_K})\cdot \mathcal{A} \right)_{[\mathbf{i}]}\\
  &= -\sum_{j=1, j\neq i_k}^{p_k}\mathcal{S}_{[i_1,\ldots,i_{k-1},j,i_{k+1},\ldots,i_K]}\mathcal{A}_{[i_1,\ldots,i_{k-1},j,i_{k+1},\ldots,i_K]}/[(\sigma_{i_k}^k)^2 - (\sigma_j^k)^2].
\end{align*}
Hence, from (\ref{equation:final.of.right.simp}) we have,
\begin{align}
  \diverge(g) &= \sum_{\bf i}\left[-\sum_{k=1}^K \sum_{j=1, j\neq i_k}^{p_k}\mathcal{S}_{[i_1,\ldots,i_{k-1},j,i_{k+1},\ldots,i_K]}f(\mathcal{S})_{[i_1,\ldots,i_{k-1},j,i_{k+1},\ldots,i_K]}/[(\sigma_{i_k}^k)^2 - (\sigma_j^k)^2]\right.\nonumber\\
  \nonumber&\left. + \left(\frac{d}{d\mathcal{S}_{[\mathbf{i}]}}f_{\mathbf{i}}(\mathcal{S}_{[\mathbf{i}]})\right) \left(1 + \sum_{k=1}^K \sum_{j=1, j\neq i_k}^{p_k}\mathcal{S}_{[i_1,\ldots,i_{k-1},j,i_{k+1},\ldots,i_K]}^2/[(\sigma_{i_k}^k)^2 - (\sigma_j^k)^2] \right)\right]  \\
  &= \sum_{\bf i}\left[-\sum_{k=1}^K \sum_{j=1, j\neq i_k}^{p_k}\frac{\mathcal{S}_{[i_1,\ldots,i_{k-1},j,i_{k+1},\ldots,i_K]}f_{[i_1,\ldots,i_{k-1},j,i_{k+1},\ldots,i_K]}(\mathcal{S}_{[i_1,\ldots,i_{k-1},j,i_{k+1},\ldots,i_K]})}{(\sigma_{i_k}^k)^2 - (\sigma_j^k)^2}\right. \label{div.formula1}\\
  \nonumber &\left. + \left(\frac{d}{d\mathcal{S}_{[\mathbf{i}]}}f_{\mathbf{i}}(\mathcal{S}_{[\mathbf{i}]})\right) \left(1 + \sum_{k=1}^K \sum_{j=1, j\neq i_k}^{p_k}\mathcal{S}_{[i_1,\ldots,i_{k-1},j,i_{k+1},\ldots,i_K]}^2/[(\sigma_{i_k}^k)^2 - (\sigma_j^k)^2] \right)\right].
\end{align}

We can rearrange the summations in the left part of
(\ref{div.formula1}) by switching the order of the $j$ and the $i_k$
and then altering the notation of the dummy variables to obtain:
\begin{align*}
  \diverge(g) &= \sum_{\bf i}\left[\mathcal{S}_{[\mathbf{i}]}f_{\bf i}(\mathcal{S}_{[\mathbf{i}]})\sum_{k=1}^K \sum_{j=1, j\neq i_k}^{p_k}1/[(\sigma_{i_k}^k)^2 - (\sigma_j^k)^2]\right. \\
  &\left. + \left(\frac{d}{d\mathcal{S}_{[\mathbf{i}]}}f_{\mathbf{i}}(\mathcal{S}_{[\mathbf{i}]})\right) \left(1 + \sum_{k=1}^K \sum_{j=1, j\neq i_k}^{p_k}\mathcal{S}_{[i_1,\ldots,i_{k-1},j,i_{k+1},\ldots,i_K]}^2/[(\sigma_{i_k}^k)^2 - (\sigma_j^k)^2] \right)\right].
\end{align*}

Hence, the SURE for these higher-order spectral functions
(\ref{equation:hose.s}) is:
\begin{align*}
  \sure(g(\mathcal{X})) &= -p\tau^2 + ||f(\mathcal{S}) - \mathcal{S}||^2 + 2\tau^2\sum_{\bf i}\left[\mathcal{S}_{[\mathbf{i}]}f_{\bf i}(\mathcal{S}_{[\mathbf{i}]})\sum_{k=1}^K \sum_{j=1, j\neq i_k}^{p_k}1/[(\sigma_{i_k}^k)^2 - (\sigma_j^k)^2]\right. \\
  &\left. + \left(\frac{d}{d\mathcal{S}_{[\mathbf{i}]}}f_{\mathbf{i}}(\mathcal{S}_{[\mathbf{i}]})\right) \left(1 + \sum_{k=1}^K \sum_{j=1, j\neq i_k}^{p_k}\mathcal{S}_{[i_1,\ldots,i_{k-1},j,i_{k+1},\ldots,i_K]}^2/[(\sigma_{i_k}^k)^2 - (\sigma_j^k)^2] \right)\right].
\end{align*}

\bibliography{sure_bib}
\end{document}